		\newcommand*\patchAmsMathEnvironmentForLineno[1]{%
			\expandafter\let\csname old#1\expandafter\endcsname\csname #1\endcsname
			\expandafter\let\csname oldend#1\expandafter\endcsname\csname end#1\endcsname
			\renewenvironment{#1}%
			{\linenomath\csname old#1\endcsname}%
			{\csname oldend#1\endcsname\endlinenomath}%
		}%
		\newcommand*\patchBothAmsMathEnvironmentsForLineno[1]{%
			\patchAmsMathEnvironmentForLineno{#1}%
			\patchAmsMathEnvironmentForLineno{#1*}%
		}%
\newtheorem{theorem}{Theorem}
\newcommand{\p}{\mathbf{p}}
\newcommand{\bp}{\mathbf{b}}
\newcommand{\bpi}{\bm{\pi}}
\newcommand{\SW}{\mathrm{SW}}
\newcommand{\Rev}{\mathrm{Rev}}
\newcommand{\ivcgstar}{${\cal M}_{\rm I}^{\rm VCG*}$}
\newcommand{\dvcg}{${\cal M}_{\rm D}^{\rm VCG}$}
\newcommand{\ivcg}{${\cal M}_{\rm I}^{\rm VCG}$}
\newcommand{\igsp}{${\cal M}_{\rm I}^{\rm GSP}$}
\newcommand{\tp}{\tilde{\p}}
\title{Efficiency of Ad Auctions with Price Displaying}
\author{
	Matteo Castiglioni\textsuperscript{\rm 1}, Diodato Ferraioli\textsuperscript{\rm 2}, Nicola Gatti\textsuperscript{\rm 1}, Alberto Marchesi\textsuperscript{\rm 1}, Giulia Romano\textsuperscript{\rm 1}\thanks{All the authors contributed equally.} \\
}	
\begin{document}
\maketitle

\begin{abstract}
Most of the economic reports forecast that almost half of the worldwide market value unlocked by AI over the next decade (up to 6 trillion USD per year) will be in {\em marketing\&sales}.
In particular, AI will enable the optimization of more and more intricate economic settings, in which multiple different activities need to be jointly automated.
This is the case of, \emph{e.g.}, \emph{Google Hotel Ads} and \emph{Tripadvisor}, where auctions are used to display ads of similar products or services together with their prices.
As in classical ad auctions, the ads are ranked depending on the advertisers' bids, whereas, differently from classical settings, ads are displayed together with their prices, so as to provide a direct comparison among them.
This dramatically affects users' behavior, as well as the properties of ad auctions.
We show that, in such settings, social welfare maximization can be achieved by means of a direct-revelation mechanism that jointly optimizes, in polynomial time, the ads allocation and the advertisers' prices to be displayed with them.
%
However, in practice it is unlikely that advertisers allow the mechanism to choose prices on their behalf.
Indeed, in commonly-adopted mechanisms, ads allocation and price optimization are decoupled, so that the advertisers optimize prices and bids, while the mechanism does so for the allocation, once prices and bids are given.
We investigate how this decoupling affects the efficiency of mechanisms.
In particular, we study the \emph{Price of Anarchy} (PoA) and the \emph{Price of Stability} (PoS) of indirect-revelation mechanisms with both VCG and GSP payments, showing that the PoS for the revenue may be unbounded even with two slots, and the PoA for the social welfare may be as large as the number of slots.
Nevertheless, we show that, under some assumptions, simple modifications to the indirect-revelation mechanism with VCG payments achieve a PoS of $1$ for the revenue.
\end{abstract}

\section{Introduction}
Most of the economic reports forecast that \emph{artificial intelligence} (AI) will unlock up to 12 trillion USD per year worldwide by the next decade, and almost half of this amount will derive from the {\em marketing\&sales} area (see, \emph{e.g.},~\citep{McKinsey2018}). 
In particular, AI is playing a crucial role to tackle various problems, including, \emph{e.g.}, auction design~\citep{Bachrach2014Optimising}, the automation of advertisers' budget~\citep{Nuara2018Combinatorial} and bidding strategies~\citep{He2013Game}, and the optimization of conversion funnels~\citep{Nuara2019Dealing}. 

In this paper, we focus on recently-emerged online advertising settings where ad auctions are employed to display ads of similar products or services together with their prices. 
This is the case of, \emph{e.g.}, \emph{Google Hotel Ads} and \emph{Tripadvisor}, where users search for the availability of a hotel room in a given date. 
The web page of results shows a ranking of banners advertising similar hotel rooms that match the search criteria. 
Each banner displays the name of the advertiser providing the online booking service, together with the per-night selling price of the room.
Such settings are similar to standard ad auctions, since the ads are ranked depending on the advertisers' bids.
On the other hand, they also fundamentally differ from them, as the ad allocation must also take prices into account, and these are displayed inside the banners so as to provide a direct comparison among them.
%
%
This dramatically affects users' behavior, as well as the efficiency and the properties of ad auctions. 
The goal of this work is to investigate how the additional degree of freedom introduced by prices influences the problem of finding an optimal ad allocation and the revenue of the mechanisms.

The price-displaying feature of our setting introduces \emph{externalities among the ads}, since the probability that a user clicks on an ad depends on the prices displayed with both the ad being clicked and the other ads in the allocation.
Several forms of externalities are investigated in the literature on ad auctions.
However, to the best of our knowledge, no previous work takes into account price displaying. 
For instance, \citet{kempeWINE2008}~and~\citet{DBLP:conf/wine/AggarwalFMP08} introduce a basic user model that is currently adopted by most of the mechanisms.
In this model, a Markovian user observes the slots in a top-down fashion, moving down slot by slot with a given continuation probability and stopping on a slot to observe its ad with the remaining probability. 
\citet{kempeWINE2008} also propose richer models where the probability with which a user moves from a slot to the next one depends on the ad actually displayed in the former.
In this case, it is \emph{not} known whether the ad allocation problem admits a polynomial-time algorithm; however, \citet{DBLP:conf/aaai/FarinaG16,DBLP:journals/jair/FarinaG17} provide several algorithms
showing that in special cases a constant approximation can be achieved. Further externalities models are explored by \citet{fotakisSAGT2011} and \citet{GATTI2018150}, which allow for potentially different externalities for each pair of ads. 
However, with these models, the ad allocation problem is \textsf{NP}-hard and, in some cases, even inapproximable. 
It is also worth mentioning that similar models are adopted in mobile geo-located advertising by~\citet{DBLP:conf/aaai/GattiRCG14}.
In our model, we assume that the probability with which a user clicks on an ad depends on the price displayed with the ad \emph{and} on the lowest among all displayed prices. 
In particular, we model the click probability as a monotonically decreasing function of the ad price, assuming that the demand curve is monotonically decreasing in the price and that it is unlikely that a user clicks on an ad with a price larger than her reserve value.
We also assume that the click probability is monotonically decreasing in the difference between the ad price and the lowest displayed price, as the user's interest in any feature different from price (\emph{e.g.}, brand and loyalty) decreases as such difference increases.
%
%
%

In our setting, the private information of each advertiser (\emph{i.e.}, her type) is a pair composed by the probability with which a user visiting the advertiser's web page produces a conversion (\emph{e.g.}, a purchase) and the advertiser's cost for a unit of product or service.
On the other hand, the prices constitute an additional degree of freedom that can be controlled by either the advertisers or the mechanism.

As a first step, we present a direct-revelation mechanism that maximizes the social welfare by jointly optimizing over the ad allocations and the prices displayed with the ads.
Differently from what happens in most of the externalities models studied in the literature, such optimization problem can be solved in polynomial time for a given discretization of price values.
We also study the properties of the direct-revelation mechanism when VCG payments are used, showing that incentive compatibility, individual rationality and weak budget-balance hold in our setting. 
In real-world scenarios, it is unlikely that the advertisers let the mechanism select prices on their behalf, as required by the direct-revelation mechanism. 
In the (indirect-revelation) mechanisms that are currently adopted in real-world applications, the optimization over ad allocations and that over prices are decoupled.
In particular, each advertiser finds her optimal price and bid, while the mechanism optimizes over ad allocations once prices and bids are given. 
As for the direct-revelation mechanism, the best ad allocation can be found in polynomial time given prices and bids.
However, even if these indirect-revelation mechanisms allow the advertisers not to reveal private (and potentially sensitive) information, they can lead to inefficient equilibria.

We investigate the equilibrium inefficiency of indirect-revelation mechanisms with GSP and VCG payments, in terms of \emph{Price of Anarchy} (PoA) and \emph{Price of Stability} (PoS) in complete information settings.
In the literature, PoA and PoS are commonly-adopted efficiency metrics for standard ad auctions, in which the price variable is not taken into account.
For instance, \citet{caragiannisEC2011}, \citet{lucierEC2011}, and \citet{CARAGIANNIS2015343} show that the PoA for the social welfare of the GSP is upper bounded by $1.3$ with complete information and by $3$ with incomplete information, while \citet{DBLP:journals/jair/FarinaG17} and \citet{giotisWINE2008} study the inefficiency with specific externalities.
In our setting, the presence of externalities precludes the adoption of the tools provided by \citet{DBLP:journals/jair/RoughgardenST17} and \citet{hartlineEC2014} to bound the inefficiency of equilibria for the social welfare and the revenue, respectively, thus pushing us towards the development of \emph{ad hoc} approaches.
In particular, we show that, in our setting, the inefficiency of the indirect-revelation mechanisms with VCG and GSP mechanisms is much higher than that of the classical mechanisms without prices, even when excluding overbidding, since the PoS for the revenue may be unbounded even with two slots and the PoA for the social welfare may be as large as the number of slots.
Furthermore, with VCG payments, the PoS for the social welfare is $1$, while, with GSP payments, it is at least $2$, suggesting that GSP payments perform worse than VCG ones.

A crucial question is whether inefficiency can be reduced when letting the advertisers choose their prices.
We show that, under some assumptions, simple modifications to the indirect-revelation mechanism with VCG payments---requiring each advertiser to report an additional price---achieve a PoS of $1$ for the revenue.

\section{Formal Model}


%
There is a set $N=\{1,\dots,n\}$ of $n$ agents, who simultaneously  play the role of advertisers and sellers. 
%
Each agent sells a single good on her own website (\emph{e.g.}, an online marketplace) and relies on an external ad publisher that advertises the good through a single ad in which the price is displayed.
Since the goods being sold by the agents are similar, the price comparison that users perform on the publisher's website results in a high competition level among the agents, as happening in classical  comparator websites~\cite{JUNG20142079}.
%
In the following, for the ease of presentation, we use index $i \in N$ to refer to the agent, her good, and also her ad. 
Figure~\ref{figure:scenario} provides an overview of our scenario.

\begin{figure}[b!]
\begin{center}
\includegraphics[width=0.47\textwidth]{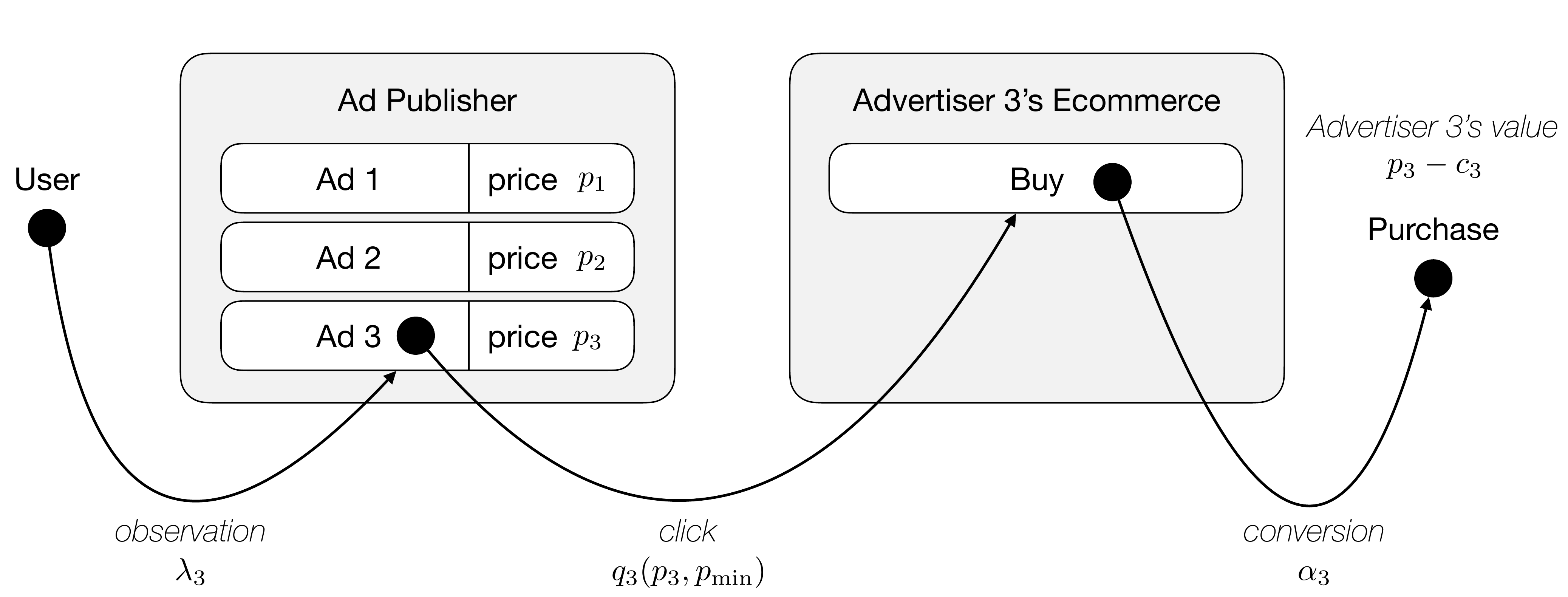}
\end{center}
\caption{An example of ad auction with price displaying. A user visits a web page with three ads (ad~$1$, ad~$2$, and ad~$3$) together with their prices ($p_1$, $p_2$, and $p_3$). The user observes slot~$3$ with probability $\lambda_3$. Once observed slot~$3$, the user clicks on the ad displayed in slot~$3$, \emph{i.e.}, ad $3$, with probability $q_3(p_3,p_{\min})$ where $p_{\min}$ is the minimum price among $p_1,p_2,p_3$. The user visits the web page of advertiser~$3$ (\emph{e.g.}, an online marketplace), and, then, produces a conversion (\emph{e.g.}, purchase) with probability $\alpha_3$. The value that advertiser~$3$ gets from the conversion is $p_3 - c_3$.}
\label{figure:scenario}
\end{figure}

%
For every $i$, we denote with $c_i \in \mathbb{R}_{\geq 0}$ and $p_i \in \mathbb{R}_{\geq 0}$ the \emph{cost} of supply and the selling \emph{price} of agent $i$'s good, respectively.
Furthermore, we denote with $\alpha_i \in [0,1]$ the probability with which a user buys agent $i$'s good when visiting her website.
Thus, agent~$i$'s expected \emph{gain} from a visit of a user on her website is $\alpha_i \, (p_i - c_i)$.
Let us remark that the \emph{conversion probability} $\alpha_i$ is constant w.r.t.~the price $p_i$, since we assume that the user is aware of the price before visiting the website and, thus, she does not visit it if the price is larger than her reserve value. 
As previously discussed, the user first observes the ads on the publisher's website, together with their prices, and, then, she clicks on an ad so as to visit the corresponding advertiser's website.
%
%
Therefore, the motivation behind an uncompleted conversion following the user's visit to the advertiser's web page does \emph{not} concern the price (\emph{e.g.}, it may be due to the user acquiring more information on the seller, or potential extra fees and/or ancillary services).
The pair $(\alpha_i,c_i)$ is a private information of agent~$i$, and sometimes we will refer to it as her type $\theta_i$. 
We let $\Theta = [0, 1] \times \mathbb{R}_{\geq 0}$ be the space of types of every agent.

%

%
The ad publisher has a set $M=\{1,\dots,m\}$ of slots in which the ads are displayed.
An \emph{assignment} of ads to slots (also called \emph{allocation}) is represented by a function $f \colon N \rightarrow M \cup \{\bot\}$ such that there is at most one ad per slot (\emph{i.e.}, there are no ads $i,h \in N$ such that $i\neq h$ and $f(i) = f(h) \in M$). 
All the ads that are not assigned to slots in $M$ are assigned to $\bot$, meaning that these ads are not displayed. 
For every slot $j \in M$, we denote with $\lambda_j \in [0, 1]$ the probability (called \emph{prominence}) that a user observes the ad displayed in that slot. 
As customary in the literature, we assume that $\lambda_1 \geq \lambda_2 \geq \ldots \geq \lambda_m$. 
For the ease of notation, we define $\lambda_\bot = 0$.
Furthermore, for every agent $i $, we denote with $q_i \in [0,1] $ the probability (called \emph{quality}) that a user clicks on ad~$i$ conditioned on its observation. 
In our setting, $q_i$ depends on the prices, as they are displayed with the ads. 
In particular, $q_i$ is a function of the prices $\p=\{p_i\}_{i\in N}$ of agents whose ads are displayed, since the user can compare all the prices shown on the web page when deciding the website from which to buy a good.
This dependency introduces externalities among the ads.
In this work, we assume that $q_i: \mathbb{R}_{\geq 0} \times \mathbb{R}_{\geq 0} \to [0,1]$, where $q_i(p_i,p_{\min})$ denotes the agent $i$'s quality when her price is $p_i$ and the minimum price among all the displayed ads is $p_{\min}$, with $p_{\min} = \min_{h \in N : f(h)\in M}\{p_h\}$ (for the sake of notation, we omit the dependency of $p_{\min}$ on $f$).
Moreover, given $p_{\min}$, $q_i$ is (non-strictly) monotonically decreasing in $p_i$ since, as previously discussed, a user clicks on the ad if the price is non-larger than the user's reserve value.
%
%
Finally, $q_i$ is (non-strictly) monotonically increasing in $p_{\min}$, given $p_i$.
%
The rationale behind this assumption is that, given $p_i$, the probability that a user clicks on ad~$i$ decreases as the gap between $p_i$ and the minimum price $p_{\min}$ increases, capturing a potential reduction of the user's interest for agent $i$'s good.
A simple example is when the users are only interested in the price and, thus, $q_i$ is zero if $p_i > p_{\min}$.
We also assume that there exists $p_i \in \mathbb{R}_{\geq 0}$ maximizing $q_i(p_i,p_i)\,\alpha_i\,(p_i-c_i)$ and, thus, there exists $p_i < \infty$ that agent~$i$ would use when displayed alone.
Finally, we remark that, as it is customary in the literature, parameters $\lambda$ and $q$ are estimated by the ad publisher.

Every \emph{mechanism} receives some input (or \emph{bid}) from every agent~$i$, chooses an allocation $f$, and charges every agent~$i$ of a payment $\pi_i$. 
We say that the mechanism is \emph{direct-revelation} if the input provided by agent $i$ belongs to $\Theta$, \emph{i.e.}, it consists of a conversion probability and a cost, which are \emph{not} necessarily the real ones (her type).
Otherwise we say that the mechanism is \emph{indirect-revelation}.


In our setting, a direct-revelation mechanism takes as input a reported type $ \theta'_i=(  \alpha'_i , c'_i) \in \Theta$ for each agent $i$, and chooses some prices $\p = \{ p_i\}_{i \in N}$  and an allocation function $f$.
We let $\bp = \{b_i\}_{i \in N}$ be the vector of declared gains, where $b_i= \alpha'_i\,(p_i -  c'_i)$ is agent $i$'s gain for the reported type $ \theta'_i$.
%
On the other hand, an indirect-revelation mechanism takes as input a price $p_i$ and a declared gain $b_i$ for each agent $i$, and chooses an allocation function $f$.
%
%
%
We say that agent~$i$ does not \emph{overbid} if $b_i \leq \alpha_i\,(p_i - c_i)$, where $p_i$ is the price given as input and $(\alpha_i ,c_i) = \theta_i$ is the true agent $i$'s type.

%
%
%
%
%

Given an allocation $f$, prices $\p$, and $b_i$, we denote with $\widehat{v}_i(f,\p,b_i) = \lambda_{f(i)} \,q_i(p_i,p_{\min}) \, b_i$ the expected (w.r.t.~clicks and purchase) \emph{value} of agent~$i$ according to her declared gain. 
The true expected value that she receives from allocation $f$ is $v_i = \lambda_{f(i)} \,q_i(p_i,p_{\min}) \, \alpha_i \, (p_i - c_i)$, while agent~$i$'s expected \emph{utility} is $u_i = v_i - \pi_i$ since the environment is quasi-linear.\footnote{The dependency of $v_i,u_i,\pi_i$ on the arguments $f,\p,b_i$ is omitted to avoid cumbersome notation.}
The \emph{social welfare} of an allocation with respect to the declared gains is $\widehat{\SW}(f,\p,\bp) = \sum_i \widehat{v}_i(f,\p,b_i)$, where $\bp = \{b_i\}_{i \in N}$. The true social welfare is $\SW = \sum_i v_i$. The \emph{revenue} is instead $\Rev=\sum_i \pi_i$.

Next, we informally introduce notable properties of mechanisms; see~\cite{RePEc:oxp:obooks:9780195102680} for  formal definitions. A mechanism, both direct- and indirect-revelation, is \emph{individually rational}, if for every agent $i$, the assigned payment $\pi_i$ is non-larger than her value $\hat{v}_i(f,\p,b_i)$ according the declared gain. Furthermore, a mechanism is \emph{weakly budget-balanced} if the sum of payments is always non-negative.
A direct-revelation mechanism is \emph{truthful} if for every agent $i$ it is a dominant strategy to report the true type $\theta_i = (\alpha_i, c_i)$ to the mechanism, \emph{i.e.}, the utility that agent~$i$ achieves by reporting $\theta_i$ is at least as large as with every alternative input, regardless of other agents' actions. 
For indirect-revelation mechanisms, we say that a set of inputs is in \emph{equilibrium} according to~\citet{nash1951non} if no agent may increase her utility by submitting a different bid, whenever the inputs of other agents remain unchanged.

\section{Mechanisms}

In this section, we introduce our direct-revelation mechanism and two indirect-revelation mechanisms.

\subsection{Direct-revelation Mechanism}
%
%
We let \dvcg~be the direct-revelation mechanism defined as follows.
%
Given the agent $i$'s input $ \theta'_i = ( \alpha'_i, c'_i) \in \Theta$, the mechanism defines the declared gain $b_i =  \alpha'_i \, (p_i - c'_i)$ for every price $p_i$.
%
Then, the mechanism computes an assignment $f^*$ and prices $\p^*$ that maximize the social welfare with respect to the declared gains; formally, 
\[
\widehat{\SW}(f^*, \p^*, \bp) = \max_{f,\p} \, \widehat{\SW}(f, \p, \bp).
\]
%
Finally, the mechanism assigns to each advertiser $i$ in the allocation (\emph{i.e.}, such that $f(i) \in M$) the VCG payment
\begin{align*}
\pi_i &=  \max_{f,\p} \sum_{\begin{subarray}{c}j \neq i\colon f(j) \in M\end{subarray}} \Big(\widehat{v}_j(f, \p, b_j) - \widehat{v}_j(f^*, \p^*, b_j)\Big)\\
& = \widehat{v}_i(f^*, \p^*, b_i) - \Delta_i,
\end{align*}
where $$\Delta_i = \widehat{\SW}(f^*, \p^*, \bp) - \max_{\begin{subarray}{c}f,\p : f(i) \notin M\end{subarray}} \widehat{\SW}(f, \p, \bp) \geq 0.$$
%
%
It is immediate to check that payments are non-negative, and that they are never larger than the value corresponding to the declared gain. Thus, the mechanism is trivially individually-rational and weakly budget-balanced. Moreover, it is not hard to verify that these payments allow the mechanism to be truthful (essentially, this is a VCG mechanism and there is no interdependence among types). Truthfulness also implies that the mechanism maximizes the true social welfare. These observations prove the following theorem.
\begin{theorem}
Mechanism {\dvcg} is truthful, individually rational, weakly budget-balanced, and maximizes $\SW$.
\end{theorem}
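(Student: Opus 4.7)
The plan is to follow the standard VCG template, with two modifications specific to this setting: (i) the mechanism jointly optimizes both the allocation $f$ and the prices $\p$ given the reported types, and (ii) externalities appear through $p_{\min}$ in the qualities $q_i$. The mildly non-routine point is that the declared gain $b_i=\alpha'_i(p_i-c'_i)$ is itself a function of the price chosen by the mechanism, so I have to keep track of how the objective the agent would like to see maximized relates to what the mechanism actually maximizes.

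For individual rationality one just observes that $\pi_i = \widehat{v}_i(f^*,\p^*,b_i) - \Delta_i \le \widehat{v}_i(f^*,\p^*,b_i)$ because $\Delta_i \ge 0$: restricting the maximization to assignments with $f(i)\notin M$ can only (weakly) decrease $\widehat{\SW}$. For weak budget-balance it suffices to show that each $\pi_i$ is non-negative, and this is where the assumptions on $q_i$ enter. Starting from the optimum $(f^*,\p^*)$, I would exhibit a candidate for the maximization defining $\Delta_i$ by removing agent $i$ from the allocation while keeping the prices of the remaining agents unchanged. Dropping $p_i^*$ from the set of displayed prices can only (weakly) increase $p_{\min}$, so by monotonicity of $q_j$ in the minimum price (given $p_j$) the declared values $\widehat{v}_j$ for $j\neq i$ weakly increase. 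Hence $\max_{f,\p:\,f(i)\notin M}\widehat{\SW}(f,\p,\bp) \ge \sum_{j\neq i}\widehat{v}_j(f^*,\p^*,b_j) = \widehat{\SW}(f^*,\p^*,\bp) - \widehat{v}_i(f^*,\p^*,b_i)$, which rearranges to $\pi_i \ge 0$.

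For truthfulness, fix the declarations of the other agents. Substituting the definition of $\pi_i$ into $u_i = v_i - \pi_i$ yields $u_i = v_i(f^*,\p^*) + \sum_{j\neq i}\widehat{v}_j(f^*,\p^*,b_j) - \max_{f,\p:\,f(i)\notin M}\widehat{\SW}(f,\p,\bp)$, where the last term is independent of $i$'s report. Maximizing her utility is thus equivalent to steering the mechanism toward a pair $(f^*,\p^*)$ that maximizes $v_i(f,\p) + \sum_{j\neq i}\widehat{v}_j(f,\p,b_j)$. Reporting truthfully makes $b_i=\alpha_i(p_i-c_i)$ for every $p_i$, so $\widehat{v}_i(f,\p,b_i)=v_i(f,\p)$ identically in $(f,\p)$, and the mechanism's objective $\widehat{\SW}(f,\p,\bp)$ coincides exactly with the expression agent $i$ wants to maximize. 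Hence the mechanism already selects its maximizer under truthful reporting, and no misreport can do better, giving dominant-strategy truthfulness.

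Welfare maximization is then immediate: when every agent is truthful, declared and true gains coincide and the mechanism selects an $(f,\p)$ maximizing the true $\SW$. I expect the only genuinely delicate point to be the budget-balance step, which critically uses the monotonicity of $q_j$ in $p_{\min}$; without that assumption removing an agent could damage the others and some payment could become negative.
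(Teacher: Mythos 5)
Your proposal is correct and follows essentially the same route as the paper, which only sketches the argument as a standard VCG verification; you fill in the details faithfully, including the correct observation that truthful reporting makes $\widehat{v}_i(f,\p,b_i)=v_i(f,\p)$ identically in $(f,\p)$ so the mechanism's objective aligns with the agent's. Your explicit justification of weak budget-balance via monotonicity of $q_j$ in $p_{\min}$ (removing agent $i$ while freezing the others' slots and prices can only raise $p_{\min}$ and hence their declared values) is exactly the right way to make precise what the paper dismisses as ``immediate.''
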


\subsection{Indirect-revelation Mechanisms}

Next, we introduce two alternative indirect-revelation mechanisms, namely {\ivcg} and {\igsp}. 
They share the same structure, but they differ in the way they compute the payments. 
They work as follows.
%
%
Agent $i$ inputs $(p_i, b_i)$, where $p_i \in \mathbb{R}_{\geq 0}$ is the price that agent~$i$ wants to be displayed for her ad and $b_i \in \mathbb{R}$ is the expected gain that $i$ declares to achieve from a click on her ad for price $p_i$.
 %
The mechanism computes an assignment $g^*$ that maximizes the social welfare with respect to the submitted prices and gains; formally
\[
\widehat{\SW}(g^*, \p, \bp) = \max_g \, \widehat{\SW}(g, \p, \bp).
\]
 %

Then, mechanism {\ivcg} assigns to each advertiser $i$ such that $g^*(i) \in M$ the following VCG payment
\begin{align*}
 \pi_i & = \max_{g} \sum_{\begin{subarray}{c}j\neq i \colon g(j) \in M\end{subarray}} \Big(\widehat{v}_j(g,\p,b_j) - \widehat{v}_j(g^*,\p, b_j)\Big)\\
 & = \widehat{v}_i(g^*, \p, b_j) - \delta_i,
\end{align*}
where
$$\delta_i = \widehat{\SW}(g^*,\p,\bp) - \max_{\begin{subarray}{c}g: g(i) \notin M\end{subarray}} \widehat{\SW}(g,\p,\bp) \geq 0.$$
 %
 
 %
%

 
W.l.o.g., let the optimal allocation $g^*$ be such that only the first $\ell \le m$ slots are assigned and no slot $j > \ell$ is assigned.
Then, mechanism {\igsp} assigns to each advertiser $i$ such that $g^*(i) \in M$ and $g^*(i) < \ell$ (\emph{i.e.}, $i$ is assigned to a slot different from $\ell$) the following payments: 
\begin{equation}
	\label{eq:gsp_pay}
	\varpi_i = \lambda_{g^*(i)} q_j(p_j, p_{\min}) b_j,
\end{equation}
 where $j$ is such that $g^*(j)=g^*(i)+1$.
When $g^*(i) = \ell$, there are two possible payments. If all the advertisers $j$ that are \emph{not} assigned to any slot (\emph{i.e.}, such that $g^*(j)=\bot$) have a price $p_j < p_{\min}$, then $\varpi_i = 0$. Otherwise, the payment is
\begin{equation}
	\label{eq:last}
	\varpi_i = \lambda_{g^*(i)} \max_{j : p_j \ge p_{\min} \land g^*(j)=\bot} q_j(p_j, p_{\min}) b_j.
\end{equation}


As for {\dvcg}, it is immediate to check that payments are non-negative, and that they are always less than the value corresponding to the declared gain. Hence, {\ivcg} is individually rational and weakly budget-balanced. Moreover, one may hope that the inputs that agents select at any equilibrium are such that the allocation selected by the mechanism maximizes social welfare. Unfortunately, we will show in the next sections that this is \emph{not} the case.

The payments of {\igsp} are at least zero, and, thus, the mechanism is weakly budged-balanced. Let us also observe that, given an agent $i$, for every other agent $j$ such that either $g^*(j)>g^*(i)$ or $g^*(j)=\bot$ $\land$ $p_j \ge p_{\min}$, we have that $q_j(p_j, p_{\min}) b_j \leq q_i(p_i, p_{\min}) b_i$. 
Otherwise, the allocation $g$ obtained from $g^*$ by fixing $g(j)=g^*(i)$, $g(i)=g^*(j)$, and $g(k)=g^*(k)$ for all $ k \notin \{i,j\}$ would achieve a larger social welfare (according to declared gains). Hence, we have that $\varpi_i \leq \widehat{v}_i(g^*, \p, b_i)$, and, thus, the mechanism is individually rational. We remark that, for this property to hold, it is fundamental that in Equation \ref{eq:last} we only consider the unassigned agents $j$ who have a declared price $p_j\ge p_{\min}$. Indeed, an agent $j$ with $p_j < p_{\min}$ may have a large $q_j(p_j, p_j)b_j$ value, so that, if the $j$-th ad is displayed, the minimum price changes from $p_{\min}$ to $p_j$,  $q_j(p_j, p_j) b_j > q_i(p_i, p_{\min}) b_i$, and $\varpi_i> \widehat{v}_i(g^*, \p, b_i)$, where $i$ is the agent assigned to slot $\ell$.
Nevertheless, this agent may not be chosen by the allocation $g^*$ because of the negative externalities that its low price would put on other agents (by lowering their value and the social welfare). 
As a result, an optimal allocation may not assign all the slots.
We finally observe that, as for {\ivcg}, also {\igsp} may fail to optimize the true social welfare. The following sections will bound the extent of this failure.

\section{Computational Complexity}
In general, externalities make hard the problem of computing an allocation maximizing the social welfare. In this section, we prove that in our setting the problem of allocating advertisers to slots can be solved in polynomial time by both direct- and the indirect-revelation mechanisms.

Let us start with the problem of computing the allocation $g^*$ in the indirect-revelation mechanisms. The following theorem shows that $g^*$ can be computed efficiently.
\begin{theorem}
\label{thm:comp_indir}
 There exists an algorithm that computes the allocation $g^*$ in time $O(n^2 \log n)$.
\end{theorem}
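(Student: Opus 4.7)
The plan is to eliminate the price externality by guessing the minimum displayed price and then solving a classical prominence-weighted assignment for each guess. The key observation is that in any allocation, $p_{\min}$ coincides with $p_k$ for some displayed advertiser $k$. Hence, I would enumerate $k$ over $N$ and, for each choice, compute the best allocation among those that (i) include $k$ and (ii) use only advertisers $j$ with $p_j \geq p_k$; the final output is the allocation with the largest declared-gains welfare across all iterations (with the empty allocation included as a baseline).

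For a fixed $k$, define $E_k := \{ j \in N : p_j \geq p_k \}$ and, for every $j \in E_k$, the constant weight $w_j := q_j(p_j, p_k)\, b_j$. Because $k$ is forced to be displayed and all other displayed ads lie in $E_k$, the minimum displayed price is exactly $p_k$, so $\widehat{\SW}$ reduces to the linear objective $\sum_{j \in E_k : g(j) \in M} \lambda_{g(j)}\, w_j$. By the rearrangement inequality, this is maximized by selecting the $m-1$ advertisers in $E_k \setminus \{k\}$ with largest $w_j$ (discarding those with non-positive weight, since an empty slot contributes zero), adding $k$, and assigning the chosen ads to slots by matching decreasing $w_j$ against decreasing $\lambda_j$.

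Correctness follows because every feasible allocation is captured by at least one iteration: given any allocation $g$, let $k$ be any displayed advertiser achieving $p_{\min}$; the iteration for that $k$ then searches over a superset of allocations containing $g$, and returns one whose welfare is at least as large.

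Complexity: the outer loop has $n$ iterations; within each iteration, filtering $E_k$ and computing the weights $w_j$ is $O(n)$, sorting $E_k$ by $w_j$ takes $O(n \log n)$, and selecting the top $m-1$ together with $k$ and evaluating the resulting welfare is $O(n)$. The total running time is therefore $O(n^2 \log n)$. The main subtlety is the need to \emph{force} $k$ into the allocation; without this constraint, the inner optimization could return an allocation whose actual minimum price is strictly greater than $p_k$, in which case the weights $w_j$ used in the inner step would have been computed under a wrong $p_{\min}$ and the reported welfare would be inconsistent.
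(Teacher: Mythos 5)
Your proposal is correct and follows essentially the same approach as the paper: enumerate the candidate minimum price over the $n$ submitted prices, restrict to agents with $p_j \geq p_{\min}$, and greedily match the sorted weights $q_j(p_j,p_{\min})\,b_j$ to the sorted prominences, for a total of $O(n^2\log n)$. Your extra step of forcing agent $k$ into the allocation is a harmless refinement but not actually necessary: if the inner optimization returns an allocation whose true minimum price exceeds $p_k$, the monotonicity of $q_j$ in $p_{\min}$ ensures the computed welfare only \emph{underestimates} the true one, so taking the maximum over all iterations still yields the optimum.
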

\begin{proof}
 Let $\bp$ and $\p$ be the set of gains and prices submitted by agents.
 First observe that, given a minimum displayed price $p_{\min}$, the allocation that maximizes the social welfare (with respect to gains and prices in input), can be trivially computed by sorting agents in $\{i \colon p_i \geq p_{\min}\}$ in order of $q_i(p_i, p_{\min}) b_i$ and assigning slot $1$ to the agent that maximizes this quantity, slot $2$ to the second such agent, and so on. Note that this operation requires $O(n \log n)$ steps.
 
 However, in order to provide the allocation $g^*$, we also need to decide which is the best value for $p_{\min}$. However, since $p_{\min}$ must belong to $\p$, it is sufficient to compute the best allocation by using as minimum displayed price each of the at most $n$ different prices in $\p$, and choosing the allocation that optimizes the social welfare.
\end{proof}

Computing $g^*$ is an easier problem than the one faced by the direct-revelation mechanism, since, for the former, prices are given and we optimize only over the allocation function, while, for the latter, optimization occurs both on the allocation function and prices.
Nevertheless, the following theorem shows that $f^*$ and $\p^*$ can also be computed efficiently, as long as the set $P$ of allowed prices is discrete and finite.
\begin{theorem}
\label{thm:comp_direct}
 There is an algorithm that computes the allocation $f^*$ and prices $\p^*$ in time $O(n^2|P|(|P|+\log n))$.
\end{theorem}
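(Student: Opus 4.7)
The plan is to reduce the joint optimization over $(f,\p)$ to a bounded collection of independent subproblems, each of which decouples across agents. Two observations drive the approach. First, once a minimum displayed price $p_{\min}$ is fixed, each agent's contribution to the declared social welfare depends only on her own chosen price through $q_i(\cdot,p_{\min})$, so her best price can be found by a one-dimensional enumeration over $P$. Second, since $\lambda_1\geq\dots\geq\lambda_m$, for any fixed per-agent scores the optimal slot assignment is greedy by score, by the standard rearrangement inequality. Thus Theorem~\ref{thm:comp_indir} already handles the inner subproblem once $p_{\min}$ is known; the only new complication is that $p_{\min}$ itself is a decision variable, and one must ensure the chosen minimum is realized by the selected allocation.

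To handle this, I would enumerate over all pairs $(i^*,p^*)\in N\times P$, interpreting $i^*$ as the displayed agent whose price equals the minimum and $p^*$ as the value of that minimum. For each of the $n|P|$ pairs, first fix $p_{i^*}=p^*$ with score $s_{i^*}=q_{i^*}(p^*,p^*)\,\alpha'_{i^*}(p^*-c'_{i^*})$; then, for every $j\neq i^*$, enumerate $p\in P$ with $p\geq p^*$ and pick $\tilde p_j=\arg\max_{p}q_j(p,p^*)\,\alpha'_j(p-c'_j)$ with score $\tilde s_j$. This costs $O(|P|)$ per agent, hence $O(n|P|)$ per iteration. Next, to pick the best allocation subject to displaying $i^*$, sort the remaining agents by $\tilde s_j$, take the top $m-1$ of them together with $i^*$, and assign the resulting $m$ agents to slots $1,\dots,m$ in decreasing order of score; this costs $O(n\log n)$. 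Return the configuration with maximum computed $\widehat{\SW}$.

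For correctness, let $(f^{\rm opt},\p^{\rm opt})$ attain the optimum, and let $i^{\rm opt}$ be some displayed agent attaining the minimum price $p^{\rm opt}$. At the iteration $(i^*,p^*)=(i^{\rm opt},p^{\rm opt})$, each $\tilde s_j$ is by construction at least $q_j(p^{\rm opt}_j,p^{\rm opt})\,\alpha'_j(p^{\rm opt}_j-c'_j)$, because $p^{\rm opt}_j\geq p^{\rm opt}$ is a feasible choice in the inner maximization. The greedy slot assignment then optimally packs those scores into slots (rearrangement), so the iteration's computed welfare is at least $\widehat{\SW}(f^{\rm opt},\p^{\rm opt},\bp)$; since no iteration can exceed the global optimum, equality holds. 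The total cost is $n|P|$ iterations each of cost $O(n|P|+n\log n)=O(n(|P|+\log n))$, summing to $O(n^2|P|(|P|+\log n))$.

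The main delicate point will be the constrained slot assignment, i.e., verifying that choosing the top $m-1$ agents by $\tilde s_j$ from $\{j\neq i^*\}$ together with $i^*$, and then permuting optimally, is truly optimal over all allocations displaying $i^*$. The argument is a short exchange: for any fixed displayed set $S$ with $i^*\in S$ the best permutation is given by rearrangement; and if $S$ omits some $k\neq i^*$ with $\tilde s_k$ larger than the score of some $j\in S\setminus\{i^*\}$, swapping $j$ for $k$ weakly improves the objective, so the optimum must contain $i^*$ together with the $m-1$ highest-score agents from $N\setminus\{i^*\}$. Everything else is routine bookkeeping.
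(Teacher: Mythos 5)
Your approach is the same as the paper's: enumerate the $n|P|$ candidate pairs (minimum-price agent, minimum price), decouple the remaining agents' price choices via a one-dimensional maximization of $q_j(p,p^*)\,b_j(p)$ over $p\ge p^*$, and then sort by score and assign greedily to slots, for the identical total cost of $O(n^2|P|(|P|+\log n))$. The one point you gloss over, and which the paper handles explicitly, is that an agent whose best achievable score $\tilde s_j$ is non-positive should be excluded from the allocation altogether (the paper sets $p^*_j(\hat p)=\bot$ and prunes such agents): because the only externality is through $p_{\min}$, displaying such an agent can only lower the welfare, so an optimal allocation may leave slots empty. Your prescription to ``take the top $m-1$'' unconditionally, and the correctness claim that the sum of the top $m-1$ scores dominates the optimum's contribution, both fail when fewer than $m-1$ agents have positive scores (the forced extra agents subtract value); likewise your exchange argument only compares displayed sets of equal size. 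The fix is trivial---keep only agents with $\tilde s_j>0$ before truncating to $m-1$---and does not change the running time, but it is needed for the stated optimality to hold.
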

\begin{proof}
Let $b_i(p) = \alpha'_i(p-c'_i)$ be the expected gain of agent $i$ according to her input when ad $i$ is displayed with price $p$, where $(\alpha'_i, c'_i)$ is the input of agent $i$. For each agent $i$ and every price $\hat{p} \in P$ we compute $p^*_i(\hat{p})$ as follows: if $\max_{\begin{subarray}{c}p \in P : p \geq \hat{p}\end{subarray}} q_i(p, \hat{p}) b_i(p) > 0$, 
then $$p^*_i(\hat{p}) = \arg \max_{\begin{subarray}{c}p \in P:p \geq \hat{p}\end{subarray}} q_i(p, \hat{p}) b_i(p),$$ otherwise we set $p^*_i(\hat{p}) = \bot$.
Roughly speaking, $p^*_i(\hat{p})$ is the best price (according to her input) for agent $i$ when the minimum displayed price is $\hat{p}$ and the $i$-th ad is displayed (and thus $i$'s price is at least $\hat{p}$). Clearly, if there is no price larger than or equal to $\hat{p}$ guaranteeing to agent $i$ a positive utility, then she prefers to be not displayed. For this reason, in the latter case, we do not assign any value to  $p^*_i(\hat{p})$.
Notice that $p^*_i(\hat{p})$ can be computed by evaluating the function for every $p \in P$ with $p \geq \hat{p}$, requiring at most $O(|P|)$ operations.

Then, if the minimum displayed price $p_{\min}$ was given, along with the agent to which it is assigned, then we simply choose price $p^*_i(p_{\min})$ for each remaining agent $i$ (this can be done in $O(nP)$ steps), prune out agents for which $p^*_i(p_{\min}) = \bot$, and finally compute the corresponding optimal assignment by sorting the remaining agents in order of $b_i(p^*_i(p_{\min}))$, as shown in Theorem~\ref{thm:comp_indir} (in $O(n \log n)$ steps).

Unfortunately, selecting $p_{\min}$ is much harder than in the indirect case: not only the value of $p_{\min}$ can assume every value in $P$ (and not just one among at most $n$ alternatives), but we also need to decide which agent should display this price. For this reason, we need to go through every price $p \in P$ and every agent $i$ and compute the best solution that would be achieved when $i$ is the agent displaying the minimum price $p$. Since for each of the $nP$ possible choices, computing the best solution requires time $O(nP+n\log n)$, we achieve the desired running time.
\end{proof}

Observe that the dependence on $|P|$ in Theorem~\ref{thm:comp_direct} is in some way necessary as long as we would like to keep quality function as general as possible. It is not hard to see that we can avoid to check all prices by doing opportune restriction on the quality functions.

We finally highlight that the discretization of the set of prices does not affect the property of the mechanism. In particular, truthfulness continues to hold, since the mechanism is maximal-in-the-range.


\section{Performance of the Indirect Mechanisms}

For the sake of presentation, we provide the informal definitions of PoS and PoA for social welfare and revenue; formal definitions can be found in~\cite{10.5555/1296179}. 
\begin{itemize}
\item PoS for the social welfare is the minimum---w.r.t.~all the Nash equilibria---ratio between the maximum achievable social welfare
and the social welfare of an allocation achievable in a Nash equilibrium of an indirect-revelation mechanism \ivcg{} or \igsp{}.
\item PoA for the social welfare is the maximum---w.r.t.~all the Nash equilibria---ratio between the maximum achievable social welfare
and the social welfare of an allocation achievable in a Nash equilibrium of an indirect-revelation mechanism \ivcg{} or \igsp{}.
\item PoS for the revenue is the minimum---w.r.t.~all the Nash equilibria---ratio between the maximum revenue achievable by an individually-rational mechanism
and the revenue achievable in a Nash equilibrium  of an indirect-revelation mechanism \ivcg{} or \igsp{}.
\item PoA for the revenue is the maximum---w.r.t.~all the Nash equilibria---ratio between the maximum revenue achievable by an individually-rational mechanism
and the revenue achievable in a Nash equilibrium  of an indirect-revelation mechanism \ivcg{} or \igsp{}.
\end{itemize}
\begin{table}[b]
\begin{center}
\resizebox{\columnwidth}{!}{
\renewcommand{\arraystretch}{1.3}
\begin{tabular}{p{0.06\textwidth}||p{0.06\textwidth}|p{0.06\textwidth}|p{0.06\textwidth}||p{0.06\textwidth}|p{0.06\textwidth}|p{0.06\textwidth}}
			&	\multicolumn{3}{|c||}{$1$ slot}								&	\multicolumn{3}{|c}{$m\geq 2$ slots}					\\	\cline{2-7}
			&	\multicolumn{2}{|c|}{Social Welfare}	&	Revenue				&	\multicolumn{2}{|c|}{Social Welfare}	&	Revenue		\\	\cline{2-7}
			&	PoS			&		PoA		&		PoS				&	PoS			&		PoA		&	PoS			\\	\hline \hline
\ivcg			&	$1$			&		$1$ 		&		$1$ $(\spadesuit)$	&	$1$			&		$m$		&	$\infty$		\\	\hline
\igsp			&	$1$			&		$1$		&		$\infty$			& 	$\geq 2$		&		$\geq m$	&	$\infty$		\\	\hline
\end{tabular}
}
\caption{Lower and upper bounds over PoS and PoA when agents do not overbid. $\spadesuit$: PoS here is taken w.r.t.~the mechanism \dvcg{} maximizing the social welfare (thus not necessarily maximizing the revenue).}
\label{table:inefficiencybounds}
\end{center}
\end{table}
Table~\ref{table:inefficiencybounds} summarizes the lower and upper bounds over the mechanisms' inefficiency when agents do not overbid;
the results when agents overbid are omitted since the inefficiency can be arbitrary even with a single slot.
Interestingly, while \ivcg{} performs as well as \dvcg with a single slot as \ivcg{} and \dvcg{} are equivalent in this case since there is no externality; with more than 2 slots the inefficiency can be large both for social welfare and revenue even in the basic case in which slots are indistinguishable and $\lambda = 1$.
In particular, in our proofs of the upper-bound results, we use a special class of quality functions that we denote as \emph{only-min} functions, which assign a value 0 to the quality of an agent when her price is not the minimum among those displayed, and we prove that in many cases no worse instance is possible.
With multiple slots, the positive result is that, with \ivcg{}, the optimal allocation is always achievable by some Nash equilibrium (\emph{i.e.}, $\text{PoS }=1$).
Nevertheless, there are auction instances in which some Nash equilibria lead to allocations whose social welfare is $1/m$ of the optimal allocation (\emph{i.e.}, $\text{PoA}=m$) or in which all the Nash equilibria lead to a revenue of zero whereas the direct-revelation mechanism \dvcg{} provides a strictly positive revenue (\emph{i.e.}, $\text{PoS}=\infty$).
\igsp{} performs even worse than \ivcg{}, both with a single and multiple slots. 
%
%
%

%
In the following, we formally provide the results on the lower and upper bounds over the mechanisms' inefficiency.

\subsection{Price of Stability for the Social Welfare}

Initially, we provide our main positive result in terms of indirect-revelation mechanisms inefficiency.
\begin{theorem}\label{thm:indirect}
The PoS for the social welfare of \ivcg{} is $1$.
\end{theorem}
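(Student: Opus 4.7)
The plan is to exhibit an explicit strategy profile under {\ivcg} that achieves the social welfare of the direct-revelation optimum, and then verify it constitutes a Nash equilibrium. Let $(f^*, \p^*)$ denote the socially optimal allocation and price vector produced by {\dvcg}. I would propose the profile in which every agent $i$ reports the pair $(p_i^*, b_i^*)$, where $b_i^* = \alpha_i (p_i^* - c_i)$ is her true gain at price $p_i^*$. At this profile, the mechanism's allocation $g^*$ maximizes $\widehat{\SW}(g, \p^*, \bp^*) = \SW(g, \p^*)$ over allocations $g$, and since $(f^*, \p^*)$ globally maximizes $\SW$, the resulting social welfare equals $\SW(f^*, \p^*)$, the overall optimum.

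The main work is to show no agent can profitably deviate. Suppose agent $i$ deviates to $(p_i', b_i')$ and let $\p' = (p_i', \p^*_{-i})$. By the standard VCG argument applied to gains when prices are fixed, reporting the true gain at the chosen price is weakly dominant, so I may restrict attention to deviations with $b_i' = \alpha_i(p_i' - c_i)$. Under such truthful gain reporting, agent $i$'s utility collapses to the familiar VCG difference
\[
u_i' \;=\; \SW(g^{**}, \p') \;-\; \max_{g: g(i) \notin M} \SW(g, \p'),
\]
where $g^{**}$ is the allocation chosen by {\ivcg} at the deviated profile.

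Two observations then close the argument. First, the subtracted term $\max_{g: g(i) \notin M} \SW(g, \p')$ is independent of $p_i'$: when $g(i)\notin M$, agent $i$ contributes zero to the social welfare (since $\lambda_\bot = 0$) and her price does not enter $p_{\min} = \min_{h: g(h) \in M} p_h$. Hence this term equals the corresponding term at the undeviated profile. Second, since $(f^*, \p^*)$ is the global maximizer,
\[
\SW(g^{**}, \p') \;=\; \max_{g} \SW(g, \p') \;\leq\; \max_{g, p_i} \SW\bigl(g, (p_i, \p^*_{-i})\bigr) \;\leq\; \SW(f^*, \p^*),
\]
with equality attained by the undeviated profile. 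Combining these two facts gives $u_i' \leq u_i$, so the deviation is not profitable.

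The main obstacle is disentangling the price choice from the gain report and arguing the VCG externality-payment structure still applies in the price-displaying setting. The critical ingredient is the invariance of the ``without-$i$'' welfare term with respect to $p_i$, which rests on $p_{\min}$ being computed only from displayed agents together with $\lambda_\bot = 0$; once this invariance is established, the argument reduces to comparing a constrained maximum (over allocations with $\p^*_{-i}$ fixed and $p_i$ free) with the global maximum of $\SW$, both of which are achieved by $(f^*, \p^*)$.
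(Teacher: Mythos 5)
Your proposal is correct and follows essentially the same route as the paper's proof: both construct the candidate equilibrium from the prices chosen by \dvcg{} together with the truthful gains at those prices, and both verify the equilibrium condition by rewriting each agent's utility in the VCG externality form and comparing the resulting welfare term against the global optimum $\SW(f^*,\p^*)$ (your reduction to truthful-gain deviations via weak dominance is just a repackaging of the paper's substitution of $\check b_i=\alpha_i(p_i-c_i)$). The only cosmetic difference is that the paper sets the reports of agents not displayed by \dvcg{} to zero, while you keep their optimal prices and true gains; both choices yield an allocation with optimal social welfare.
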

\begin{proof}
Suppose that each agent $i$ reports the pair $(\tilde{p}_i, \tilde{b}_i)$ defined as follows: if the mechanism {\dvcg} displays the ad~$i$ when run on truthful bids, then $\tilde{p}_i$ is the corresponding price, and $\tilde{b}_i = \alpha_i (\tilde{p}_i - c_i)$, \emph{i.e.}, the true gain associated to this price; otherwise $\tilde{p}_i = \tilde{b}_i = 0$. It is immediate to check that with these bids the allocation returned by {\ivcg} is exactly the same as the one returned by {\dvcg}, and, thus, it maximizes social welfare.

Unfortunately, we cannot conclude that inputs $(\tilde{p}_i, \tilde{b}_i)$ are in equilibrium directly from the truthfulness of {\dvcg}.
Indeed, the payments assigned by the indirect mechanism are different from the ones assigned by the direct mechanism. Moreover, in the former the agent may lie both about the price and about the expected gain, while in the latter an agent may essentially lie only on the expected gain. Still, in the following we prove that inputs $(\tilde{p}_i, \tilde{b}_i)$ are in equilibrium, and, thus, the theorem follows.

In particular, let $\tilde{\p} = (\tilde{p}_1, \ldots, \tilde{p}_n)$ and $\tilde{\bp} = (\tilde{b}_1, \ldots, \tilde{b}_n)$. We prove that the utility $\tilde{u}_i$ of agent $i$ when the mechanism {\ivcg} is run on $\tilde{\p}$ and $\tilde{\bp}$ is at least the utility $u_i$ that she achieves if the mechanism would be run on input $\p=(p_i, \tilde{\p}_{-i})$ and $\bp=(b_i,\tilde{\bp}_{-i})$, for every $i$, $p_i$, and $b_i$.
Indeed if $i$ is allocated by the mechanism {\ivcg} when run on input $\tilde{\p}$ and $\tilde{\bp}$, then, since, by definition of $\tilde{b}_i$, $v_i = \widehat{v}_i(f^*,\tilde{\p},\tilde{b}_i)$,
\begin{align*}
 \tilde{u}_i & = v_i - \pi_i = \widehat{v}_i(f^*,\tilde{\p},\tilde{b}_i) - \pi_i\\
 & = \widehat{SW}(f^*,\tilde{\p},\tilde{\bp}) - \max_{\begin{subarray}{c}g: g(i) \notin M\end{subarray}}\widehat{SW}(g,\tilde{\p},\tilde{\bp}) \geq 0,
\end{align*}
where $f^*$ is the allocation returned by {\dvcg} on truthful bids. If $i$ is instead, unallocated then
$$
 \tilde{u}_i = 0 = \widehat{SW}(f^*,\tilde{\p},\tilde{\bp}) - \max_{\begin{subarray}{c}g: g(i) \notin M\end{subarray}}\widehat{SW}(g,\tilde{\p},\tilde{\bp}).
$$

Thus, if the agent $i$ is unallocated by the mechanism {\ivcg} when run on input $\p$ and $\bp$, then the equilibrium condition is trivially satisfied.
Otherwise, let $\check b_i=\alpha_i (p_i-c_i)$ and $\check \bp=(\check b_i, \bp_{-i})$. We have:
\begin{align*}
u_i & = v_i - \pi_i  = \widehat{v}_i(g^*,\p,\check b_i) - \widehat{v}_i(g^*,\p,b_i)\\
& \phantom{= v_i - \pi_i  =} + \widehat{SW}(g^*,\p,\bp) - \max_{\begin{subarray}{c}g:g(i) \notin M\end{subarray}}\widehat{SW}(g,\p,\bp)\\
& = \widehat{SW}(g^*,\p,\check{\bp}) - \max_{\begin{subarray}{c}g:g(i) \notin M\end{subarray}}\widehat{SW}(g,\tilde{\p},\tilde{\bp}),
\end{align*}
where the last equality follows since $p_j = \tilde{p}_j$ and $b_j = \tilde{b}_j$ for every agent $j \neq i$.

Since $\widehat{SW}(f^*,\tilde{\p},\tilde{\bp}) \geq \widehat{SW}(g^*,\p,\check{\bp})$, because $f^*$ and $\tp$ are the allocation and the prices that maximize the social welfare, we have that $\tilde{u}_i \geq u_i$, as desired.
\end{proof}

The proof of the theorem above shows that, with VCG payments, there is always a Nash equilibrium in which every agent~$i$ bids the truthful gain $b_i$ and the price that \dvcg{} would use. Such a strategy profile leads to the same allocation of \dvcg{}, thus guaranteeing a PoS for the social welfare of 1, but, as we discuss in the following sections, the revenue of the two mechanisms can be different. The same result does not hold in the case of GSP payments, thus leading to a larger PoS for the social welfare.
\begin{restatable}{theorem}{thmfive}
The PoS for the social welfare of \igsp{} is at least $ 2$ even if agents do not overbid.
\end{restatable}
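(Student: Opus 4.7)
The plan is to construct an explicit auction instance whose optimum social welfare is $V$ but whose every Nash equilibrium of \igsp{} (under the no-overbidding restriction) has welfare at most $V/2$. The table already gives $\text{PoS}=1$ at $m=1$, so the separation must be genuinely multi-slot; the cleanest candidate is $m=2$ with $\lambda_1=\lambda_2=1$, in which case it suffices to show that at most one of the two slots can host a ``valuable'' ad in any equilibrium.

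For the instance itself I would use a quality function in the \emph{only-min} family, $q_i(p_i,p_{\min})>0$ iff $p_i=p_{\min}$, together with a small set of agents whose types are tuned so that \dvcg{} displays two of them at a common profit-maximizing price $p^\star$, giving $V=2\,q\,\alpha\,(p^\star-c)$. Under only-min, any allocation that displays at most one ad at the minimum price already has welfare at most $V/2$, so the proof reduces to ruling out Nash equilibria that simultaneously co-display two ads at a common positive-quality price close to $p^\star$.

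For the equilibrium analysis the key leverage is the asymmetry of the GSP payment rules~(\ref{eq:gsp_pay})--(\ref{eq:last}): the slot-$1$ agent pays $\lambda_1\,q_{\text{slot }2}\,b_{\text{slot }2}$, whereas the slot-$2$ agent pays either $0$ or the declared value of a suitable unassigned bidder. At a candidate ``good'' equilibrium with both agents displayed at $p^\star$ and bidding truthfully, the slot-$1$ payment cancels her whole value; a tiny downward perturbation of her reported price (which by only-min collapses the other's quality) then turns her into the sole valuable display with essentially zero payment, a strict improvement. The obvious counter is for slot~$2$ to underbid to cheapen the slot-$1$ payment, but then slot~$2$'s own best response becomes to undercut $p^\star$ and swap roles, triggering a symmetric instability. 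I would formalise these paired deviation arguments and show that together they rule out every equilibrium with two positive-quality displays.

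The principal obstacle will be the case analysis required to cover the full underbidding space and to pin down which branch of the last-slot payment, (\ref{eq:gsp_pay}) or (\ref{eq:last}), is active in each subcase. I would likely need an auxiliary third bidder, submitting price $p^\star$ together with a carefully chosen bid, whose role is to anchor the slot-$2$ payment via~(\ref{eq:last}) so that an underbidding slot~$2$ cannot simultaneously neutralise both GSP payments. The cleanest way to structure the argument is probably to prove separately that (i)~no pure Nash equilibrium co-displays two ads at a common positive-quality price and (ii)~every one-valuable-ad configuration has social welfare at most $V/2$; combined these give the claimed $\text{PoS}\ge 2$.
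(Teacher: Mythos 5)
Your high-level ingredients (only-min qualities, $m=2$, $\lambda_1=\lambda_2=1$, a third auxiliary bidder, the asymmetry of the GSP payments) all appear in the paper's construction, but the logical skeleton you propose --- (i) no equilibrium co-displays two positive-quality ads, then (ii) any single-ad configuration has welfare at most $V/2$ --- is not the paper's, and I do not think it can be made to work. The paper's instance has $n=3$, $c_1=0$, $c_2=c_3=\underline{p}-\varepsilon$, prices restricted to $[\underline{p},\overline{p}]$ with $\overline{p}=\tfrac{3}{2}(\underline{p}-\varepsilon)$, and its bad equilibria \emph{do} co-display two ads with positive quality. The factor $2$ does not come from a wasted slot but from price depression: competition between agents $2$ and $3$ forces one of them to match agent $1$'s price truthfully, which drives agent $1$'s GSP payment up to $p-c_2$ at any common price $p>\underline{p}$, so agent $1$ strictly prefers to undercut to the floor $\underline{p}$. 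There her own margin falls from $\overline{p}$ to $\underline{p}$ and the co-displayed agent's margin falls to $\varepsilon$, giving equilibrium welfare $\underline{p}+\varepsilon$ against an optimum of $2(\underline{p}-\varepsilon)$. Your target (i) is therefore false in the worst-case instance the theorem actually needs, and your accounting in (ii) ignores the depressed-price co-display case, where welfare depends on how far the price has been driven down rather than on how many slots are filled.

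A second, independent problem is existence. With the symmetric agents you describe and no cost asymmetry, the paired deviations you invoke do not terminate at any certifiable profile: a lone displayed agent is undercut-proof but invites her twin to match the price and take slot $2$ at zero payment via \eqref{eq:last}; a co-display at the common price is destroyed by the slot-$1$ agent undercutting or, at low prices, by her jumping back up to $p^\star$ and being displayed alone with zero payment (again via \eqref{eq:last}, since the other agent's price then lies below the new $p_{\min}$). You therefore risk exhibiting an instance with no pure Nash equilibrium at all, which witnesses nothing for a Price of Stability lower bound. The asymmetric costs and the calibrated price floor in the paper are exactly what arrest the undercutting at $\underline{p}$ while keeping the welfare gap equal to $2$; your deviation arguments are the right kind of moves, but the instance needs to be redesigned along those lines before they prove the claim.
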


\subsection{Price of Anarchy for the Social Welfare}
We initially focus on the basic case with a single slot, showing that in this case \ivcg{} and \igsp{} are efficient.\footnote{All the missing proofs are in the Extended Version.}

\begin{restatable}{theorem}{thmsix}
\label{theorem:POA-VCG-GSP-single-slot}
The PoA for the social welfare of \ivcg{} and \igsp{} is $1$ if $m = 1$ when agents do not overbid.
\end{restatable}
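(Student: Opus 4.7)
With $m=1$ only one ad is ever displayed, so the minimum displayed price is always the winner's own price and every quality term collapses to $q_i(p_i,p_i)$; all cross-ad externalities disappear. I define $V_i(p):=\lambda_1 q_i(p,p)\alpha_i(p-c_i)$ and let $\mathrm{OPT}=V_{j^*}(p^*):=\max_{i,p}V_i(p)$. Fix any Nash equilibrium with winner $i^*$ at price $p_{i^*}$ and social welfare $\SW=V_{i^*}(p_{i^*})$; the goal is to rule out $\SW<\mathrm{OPT}$ by exhibiting a profitable deviation for $j^*$. Under no overbidding, $\lambda_1 q_h(p_h,p_h)b_h\leq V_h(p_h)\leq \mathrm{OPT}$ for every $h$, and since $i^*$ has the largest declared single-display value and $\lambda_1 q_{i^*}(p_{i^*},p_{i^*})b_{i^*}\leq \SW$, every declared single-display value is in fact at most $\SW$.

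\textbf{\ivcg{} argument.} Consider $j^*$'s deviation to the input $(p^*,\alpha_{j^*}(p^*-c_{j^*}))$, which respects no overbidding. Her new declared value $\mathrm{OPT}>\SW$ strictly exceeds every other declared value, so she becomes the unique new winner. The single-slot VCG payment $\lambda_1\max_{h\neq j^*}q_h(p_h,p_h)b_h$ depends only on the opponents' submissions and is at most $\SW$. Therefore $j^*$'s post-deviation utility is at least $\mathrm{OPT}-\SW>0$: when $j^*\neq i^*$ this beats her equilibrium utility of $0$, and when $j^*=i^*$ the payment is literally unchanged so her utility strictly grows by $\mathrm{OPT}-\SW$. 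Either sub-case contradicts the equilibrium, hence $\SW=\mathrm{OPT}$ for \ivcg{}.

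\textbf{\igsp{} argument.} I use the same deviation. $j^*$ still wins strictly, and monotonicity of $q_h$ in $p_{\min}$ gives $q_h(p_h,p^*)\leq q_h(p_h,p_h)$ whenever $p_h\geq p^*$, so her new GSP payment $\pi'_{j^*}=\lambda_1\max_{h\neq j^*:\,p_h\geq p^*}q_h(p_h,p^*)b_h$ is again bounded by $\SW$, and her post-deviation utility is at least $\mathrm{OPT}-\SW>0$. For $j^*\neq i^*$ this already contradicts the equilibrium. For $j^*=i^*$, the payment now depends on her own price via $p_{\min}$, so the ``match $p^*$'' move is complemented with the alternative deviation to any $p'>\max_{h\neq j^*}p_h$: this drops every unassigned competitor out of the payment formula (since each $p_h<p_{\min}=p'$) and yields utility $V_{j^*}(p')$. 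A case analysis on whether $p^*$ lies above or at most $\max_{h\neq j^*}p_h$, combined with the above payment bound and the fact that $V_{j^*}(p^*)=\mathrm{OPT}$, shows that in every configuration at least one of the two deviations strictly improves her equilibrium utility $\SW-\pi_{j^*}(p_{j^*})$.

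\textbf{Main obstacle.} The delicate step is the $j^*=i^*$ sub-case of \igsp{}: unlike in \ivcg{}, changing the winner's price also shifts $p_{\min}$ and hence the payment, so the single clean ``raise to $p^*$'' deviation used for VCG does not suffice on its own. The ``outprice everyone'' escape deviation---which exploits the fact that only unassigned competitors with $p_h\geq p_{\min}$ feed the GSP payment---is the essential extra ingredient, and monotonicity of $q_h$ in $p_{\min}$ is what ties the two candidate deviations together.
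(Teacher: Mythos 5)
Your \ivcg{} argument is sound and is essentially the paper's argument made explicit: the optimal agent $j^*$ deviates to $(p^*,\alpha_{j^*}(p^*-c_{j^*}))$, wins because under no overbidding every equilibrium declared value is at most $\SW$, and pays at most $\SW$ because the single-slot VCG payment $\lambda_1\max_{h\ne j^*}q_h(p_h,p_h)b_h$ does not depend on her own report. You correctly cover both the case where $j^*$ is undisplayed and the case where she is displayed at a suboptimal price (the paper's proof only spells out the first).

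The \igsp{} half, however, has a genuine gap exactly at the step you defer to ``a case analysis'': when $j^*=i^*$, neither of your two deviations need be profitable, and in fact the desired conclusion can fail. Concretely, take $n=2$, $m=1$, $\lambda_1=1$, $\alpha_1=\alpha_2=1$, $c_1=c_2=0$, $q_1(p,x)=1$ for $p\le 1$ and $0$ otherwise, and $q_2(p,x)=\min\{0.6,\max\{0,\,0.6-1.5(p-x)\}\}$ for $p\le 1$ and $0$ otherwise; both monotonicity assumptions hold and $\mathrm{OPT}=V_1(1)=1$. Let agent $1$ truthfully play $(p_1,b_1)=(0.6,0.6)$ and agent $2$ truthfully play $(p_2,b_2)=(1,1)$. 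The declared values tie at $0.6$; breaking ties for agent $1$, she is displayed with $p_{\min}=0.6$, pays $q_2(1,0.6)\cdot 1=0$, and gets utility $0.6$. If she moves to any displayed price $p\in[0.6,1]$ her payment becomes $q_2(1,p)=1.5p-0.9$ and her utility $0.9-0.5p\le 0.6$; prices above $1$ give her zero value, lower bids lose the slot, and agent $2$ can never win. So this is a Nash equilibrium with welfare $0.6<\mathrm{OPT}$: your ``raise to $p^*$'' deviation yields $1-0.6=0.4$ and your ``outprice everyone'' deviation yields $0$. The root cause is the one you yourself flag---the winner's GSP payment grows with her own price through $p_{\min}$---but the two candidate deviations do not overcome it, and no deviation does here. (For what it is worth, the paper's own proof also stops short at this point: it only argues that some agent with maximal single-display value must be displayed, which pins down the welfare for \ivcg{} because the VCG payment is independent of the winner's own report, but not for \igsp{}.)
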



Then, we study the case with multiple slots providing a lower bound on PoA.

\begin{restatable}{theorem}{thmseven}
\label{theorem:POA-VCG-GSP-atleast-m}
The PoA for the social welfare of \ivcg{} and \igsp{} is at least $m$ if $m \geq 2$ when agents do not overbid.
\end{restatable}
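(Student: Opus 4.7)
The plan is to exhibit, for every $m \ge 2$ and every $\varepsilon > 0$, an instance of {\ivcg} (respectively {\igsp}) whose PoA for the social welfare exceeds $m - \varepsilon$, so that the supremum over instances is at least $m$. The instance I would use has $m$ slots with uniform prominence $\lambda_j = 1$ and $m$ agents with $\alpha_i = 1$, $c_i = 0$, equipped with the only-min quality functions $q_i(p, p_{\min}) = 1$ if $p = p_{\min}$ and $p \le r_i$, else $q_i(p, p_{\min}) = 0$. The only asymmetry is in the reserves: $r_1 = V$ while $r_i = V'$ for every $i \ge 2$, with $V'$ chosen in $(V/m, V)$ and close to $V$.

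The candidate bad equilibrium is the strategy profile in which agent~$1$ reports $(V, V)$ and every other agent $i$ reports $(p', p')$ for some $p' > V$; this respects no-overbidding because $b_i \le \alpha_i(p_i - c_i) = p'$. Since $p' > V > V'$, the quality of every agent $i \ge 2$ vanishes in every allocation the mechanism may examine, so these agents contribute nothing to the declared welfare. Consequently, the allocation rule---identical in {\ivcg} and {\igsp}---maximizes declared welfare at $V$ by displaying agent~$1$ alone, which is also the true welfare. In parallel, {\dvcg} can set all prices to $V'$, obtaining $p_{\min} = V' \le \min_i r_i$ and every $q_i = 1$, hence social welfare $mV'$. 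As $V' \to V^-$, the ratio $mV'/V$ approaches $m$.

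The main step is verifying the equilibrium property. Agent~$1$ already attains her maximum value $V$ and pays zero in both mechanisms (the VCG externality equals $V$, and Equation~\ref{eq:last} gives $\varpi_1 = 0$ because each unassigned $j$ satisfies $q_j(p_j, V) = 0$). For every other agent $i$, I would split the deviation analysis on the price component $p_i$: if $p_i > V'$ then $q_i \equiv 0$, so $i$ stays unallocated with zero utility; if $p_i \le V'$ then including $i$ in any allocation forces $p_{\min} \le V' < V$, collapses agent~$1$'s quality, and bounds the declared welfare of that allocation by $b_i \le p_i \le V'$, strictly worse than displaying agent~$1$ alone at welfare $V$, so $i$ is still excluded and keeps utility zero. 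The same analysis applies to {\igsp} since it shares the allocation rule with {\ivcg}. The hardest step will be this second case, where one needs to show that an undercutting deviation by agent $i \ge 2$ brings in less declared welfare than the externality it imposes on agent~$1$; this is exactly the role of the asymmetry $r_1 > r_{i \ge 2}$, which prevents agents $i \ge 2$ from matching $V$ and, as soon as they go below $V$, triggers the only-min nature of $q_1$ to destroy agent~$1$'s contribution.
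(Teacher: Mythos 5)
Your proof is correct, but it rests on a genuinely different construction from the paper's. The paper uses $n=m+1$ \emph{symmetric} agents with only-min qualities capped at a common reserve $\overline{p}$; its bad equilibrium has every agent bid the low price $\overline{p}/m$, so all $m$ slots are filled but at a price that is $m$ times too small (welfare $\overline{p}$ versus the optimum $m\overline{p}$, ratio exactly $m$). The $(m+1)$-st agent is essential there: it is what blocks a displayed agent from profitably raising her price and being shown alone, since the remaining $m$ agents still fill every slot with cumulative declared welfare $\overline{p}$. You instead use $n=m$ \emph{asymmetric} agents (one reserve $V$, the rest $V'\in(V/m,V)$), and your bad equilibrium exhibits under-allocation rather than under-pricing: only agent~$1$ is shown, at her maximal value $V$, while the others price themselves out of the market, whereas the optimum fills all $m$ slots at $V'$. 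The asymmetry $r_1>r_{i\ge 2}$ plays the role of the paper's extra agent in killing the relevant deviation --- here, undercutting by an agent $i\ge 2$ brings in at most $b_i\le V'<V$ of declared welfare while zeroing agent~$1$'s $V$, so the allocation rule keeps showing agent~$1$ alone; a deviation to $p_i\in(V',V)$ likewise gives agent $i$ quality zero and is never selected. The remaining checks (agent~$1$ pays zero under both payment rules and already attains her maximum achievable utility; the optimum is indeed $mV'$) all go through. The only substantive difference in strength is that your ratio is $mV'/V$, so you obtain $m$ only as a supremum over a family of instances as $V'\to V^-$, whereas the paper attains exactly $m$ on a single instance; for a lower bound on the PoA both are acceptable.
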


In the specific case of \ivcg{}, we show that a PoA larger than $m$ is not possible, and therefore there are no instances worse than those used in the proof of Theorem~\ref{theorem:POA-VCG-GSP-atleast-m}. 
Most interestingly, this result holds even when $q_i$ is not monotonically decreasing in $p_i$.
\begin{restatable}{theorem}{thmeight}
The PoA for the social welfare of \ivcg{} is at most $m$ if $m \geq 2$ when agents do not overbid.
\end{restatable}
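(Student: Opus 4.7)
The plan is to apply the Nash equilibrium condition to one carefully chosen truthful deviation by a single agent drawn from the optimal allocation, and to exploit the monotonicity of $q$ in its $p_{\min}$-argument to obtain a clean singleton-allocation bound. Let $(\p,\bp)$ be the equilibrium with allocation $g^*$ and true social welfare $\SW^\star$, and let $(\p^*,f^*)$ be the prices and allocation returned by \dvcg{}, whose true social welfare is $\mathrm{OPT}$. Decomposing $\mathrm{OPT}=\sum_{k}v^*_k$ over the at most $m$ slots used by $f^*$, pigeonhole gives a slot $k^*$ with $v^*_{k^*}\geq\mathrm{OPT}/m$; set $i^*=f^{*-1}(k^*)$. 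It then suffices to prove $\SW^\star\geq v^*_{k^*}$.

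I would analyse the one-agent deviation in which $i^*$ submits $(\tilde{p}_{i^*},\tilde{b}_{i^*})=(p^*_{i^*},\alpha_{i^*}(p^*_{i^*}-c_{i^*}))$, while all other agents keep their equilibrium reports. Let $(\p',\bp')$ be the new state, $g^{**}$ the mechanism's new allocation, and $\tilde{\delta}_{i^*}$ the corresponding VCG discount. Since $i^*$'s declared gain is truthful at price $p^*_{i^*}$, her post-deviation utility is $\tilde{u}_{i^*}=\tilde{\delta}_{i^*}=\widehat{\SW}(g^{**},\p',\bp')-\max_{g:\,g(i^*)\notin M}\widehat{\SW}(g,\p',\bp')$. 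The key estimate is to lower-bound $\widehat{\SW}(g^{**},\p',\bp')$ by the declared social welfare of the singleton allocation displaying only $i^*$ in slot~$1$, which equals $\lambda_1\,q_{i^*}(p^*_{i^*},p^*_{i^*})\,\alpha_{i^*}(p^*_{i^*}-c_{i^*})$. Since $\lambda_1\geq\lambda_{k^*}$ and $q_{i^*}$ is nondecreasing in its second argument with $p^*_{i^*}\geq p^*_{\min}$, this is at least $v^*_{k^*}=\lambda_{k^*}\,q_{i^*}(p^*_{i^*},p^*_{\min})\,\alpha_{i^*}(p^*_{i^*}-c_{i^*})$. Crucially, only monotonicity in $p_{\min}$ is used, which is what lets the theorem drop monotonicity of $q_i$ in $p_i$. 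For the subtracted term, since $i^*$'s new price affects only allocations that display her, $\max_{g:\,g(i^*)\notin M}\widehat{\SW}(g,\p',\bp')=\widehat{\SW}(g^*,\p,\bp)-\delta_{i^*}$, with $\delta_{i^*}=0$ when $i^*\notin g^*$.

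Combining the above with the equilibrium condition $u_{i^*}\geq\tilde{u}_{i^*}$ and the identity $u_{i^*}=v_{i^*}-\widehat{v}_{i^*}(g^*,\p,b_{i^*})+\delta_{i^*}$, the $\delta_{i^*}$ terms cancel and the inequality collapses to
\[
v_{i^*}+\bigl(\widehat{\SW}(g^*,\p,\bp)-\widehat{v}_{i^*}(g^*,\p,b_{i^*})\bigr)\geq v^*_{k^*}.
\]
The bracketed quantity equals $\sum_{j\neq i^*:\,g^*(j)\in M}\widehat{v}_j$, which by the no-overbidding assumption on the remaining agents is at most $\sum_{j\neq i^*}v_j$; hence the left-hand side is at most $\SW^\star$. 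This yields $\SW^\star\geq v^*_{k^*}\geq\mathrm{OPT}/m$, and the corner case $i^*\notin g^*$ reduces to $0\geq v^*_{k^*}-\widehat{\SW}(g^*,\p,\bp)$ and is handled identically via no overbidding. The main obstacle is isolating this particular one-agent deviation and recognising that the singleton-allocation lower bound only needs monotonicity of $q$ in $p_{\min}$; once this is observed, the VCG algebra cancels neatly to deliver the bound.
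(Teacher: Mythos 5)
Your proposal is correct and follows essentially the same route as the paper's proof: a single-agent truthful deviation to a price supporting a singleton allocation in slot~$1$, the observation that the VCG externality term ($\delta_{i^*}$, the paper's ``negative-utility term'') is constant under that agent's own deviation, the no-overbidding bound to pass from declared to true welfare, and a factor-$m$ decomposition of $\mathrm{OPT}$. The only cosmetic difference is that the paper establishes $v_i^*\le\SW^{\mathrm{NE}}$ for every agent via her solo-optimal price and then sums the top $m$ such values, whereas you pigeonhole a single agent out of the optimal allocation; both likewise rely only on monotonicity of $q_i$ in $p_{\min}$, matching the paper's remark that monotonicity in $p_i$ is not needed.
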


Finally, we show that when agents overbid, the inefficiency can be arbitrarily large.
\begin{restatable}{theorem}{thmnine}
The PoA for the social welfare of \ivcg{} and \igsp{} is $\infty$ even if $m=1$ when agents can overbid.
\end{restatable}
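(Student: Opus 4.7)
The plan is to adapt the classical overbidding PoA construction for VCG auctions to the price-displaying setting. For every $\epsilon\in(0,1)$ I will exhibit a single-slot, two-agent instance admitting a Nash equilibrium (with overbidding) whose true social welfare is $\epsilon$ while the optimal social welfare is $1$; letting $\epsilon\to 0$ then yields the claimed unbounded PoA for both \ivcg{} and \igsp{}.

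The instance takes $\lambda_1=1$, agent~$1$ of type $(\alpha_1,c_1)=(1,0)$, and agent~$2$ of type $(\alpha_2,c_2)=(\epsilon,0)$, with common quality function $q_i(p_i,p_{\min})=\mathbf{1}[p_i\le 1]$. This function is (non-strictly) monotone in both arguments and has a finite per-agent maximizer at $p_i=1$, so it fits the model. The socially optimal allocation displays agent~$1$ at $p_1=1$, yielding $\SW=1$.

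As candidate equilibrium I take agent~$2$ reporting $(\tilde{p}_2,\tilde{b}_2)=(1,1)$—legitimate overbidding since her true gain at $p_2=1$ is only $\epsilon$—and agent~$1$ reporting $(\tilde{p}_1,\tilde{b}_1)=(0,0)$. Both mechanisms then allocate the slot to agent~$2$, so the realized true welfare is $\epsilon$. A short computation shows that agent~$2$'s payment vanishes in both mechanisms: under \ivcg{} the externality $\delta_2$ is $0$ because agent~$1$ declares zero value, and under \igsp{} Equation~\ref{eq:last} returns $0$ since agent~$1$ is the sole unassigned agent and has $\tilde{p}_1=0<1=p_{\min}$. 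Hence agent~$2$'s true utility is $\epsilon$.

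The substantive step is verifying that no unilateral deviation is profitable. For agent~$2$ this is immediate: any report keeping her in the slot preserves the $0$ payment, her true value is bounded by her standalone optimum $\epsilon$, and exiting gives $0$. For agent~$1$, any winning deviation requires $p_1\le 1$, since otherwise $q_1(p_1,p_1)=0$ and her reported value is $0$; under \ivcg{} the winning payment then equals the displaced value $q_2(1,p_1)\tilde{b}_2=1$, and under \igsp{} the payment given by Equation~\ref{eq:last} is again $1$ since $\tilde{p}_2=1\ge p_{\min}=p_1$. Since agent~$1$'s true winning value is at most $p_1\le 1$, her winning utility is at most $0$, matching her current utility, and the profile is a Nash equilibrium. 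The main obstacle I expect is precisely this last \igsp{} verification, where the ``second-price'' bidder in Equation~\ref{eq:last} and the effective $p_{\min}$ both shift with $p_1$; the built-in cap $p_i\le 1$ in the quality function makes the bookkeeping tractable and rules out any escape by agent~$1$ via pushing agent~$2$ below $p_{\min}$. Taking the ratio $\SW_{\text{opt}}/\SW_{\text{eq}}=1/\epsilon$ and sending $\epsilon\to 0$ then completes the proof.
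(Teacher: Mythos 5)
Your construction is correct and is essentially the paper's own argument: a single-slot, two-agent instance in which the low-true-value agent overbids and wins at zero payment (because the high-value agent declares zero), while the high-value agent cannot profitably displace her since the resulting VCG/GSP payment of $1$ would wipe out her value. The only difference is cosmetic --- the paper encodes the smallness via a quality $\delta$ and an overbid $2\overline{p}/\delta$, whereas you encode it via $\alpha_2=\epsilon$ with bid $1$ --- and your lone slip, writing $q_2(1,p_1)$ where the counterfactual allocation actually has $p_{\min}=1$, is harmless here since $q_2$ does not depend on its second argument.
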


\subsection{Price of Stability for the Revenue}

Initially, we provide our main result, showing that \ivcg{} and \igsp{} can be arbitrarily inefficient even with 2 slots.
\begin{restatable}{theorem}{thmten}
The PoS for the revenue of \ivcg{} and \igsp{} is $\infty$ even if $m=2$.
\end{restatable}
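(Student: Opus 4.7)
My plan is to exhibit a single instance with $m=2$ on which every Nash equilibrium of both \ivcg{} and \igsp{} yields zero revenue, while some individually-rational mechanism achieves strictly positive revenue. Take $n=2$ agents, $\lambda_1=\lambda_2=1$, identical types $(\alpha_i,c_i)=(1,0)$, and the ``only-min'' quality function $q_i(p_i,p_{\min})=\mathbf{1}\{p_i=p_{\min}\le r\}$ for a fixed reserve $r>0$. This quality function is non-increasing in $p_i$, non-decreasing in $p_{\min}$, and maximizes $q_i(p,p)\,\alpha_i(p-c_i)=p\cdot\mathbf{1}\{p\le r\}$ at $p^\star=r$, so all modeling assumptions are met. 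A simple IR benchmark is the mechanism that (regardless of reports) posts $p_i=r$, allocates both agents, and charges each $\hat v_i$; this extracts revenue $2r>0$ under truthful reporting.

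For \ivcg{}, I would argue that every NE has revenue $0$ in two steps. \emph{(i)} \emph{No asymmetric profile is an NE}: if $p_1<p_2$, agent $2$ has $q_2=0$ and utility $0$, but by deviating to $p_2'=p_1$ she obtains positive true value $p_1$, while her VCG payment is $\max_{g:g(2)\notin M}\widehat{SW}(g)-\hat v_1(g^*,\p,b_1)=b_1-b_1=0$ (removing agent $2$ leaves agent $1$ at the unchanged $p_{\min}=p_1$ with $q_1=1$ and value $b_1$); hence the deviation strictly improves. \emph{(ii)} \emph{At every symmetric profile} $p_1=p_2=p$, a direct computation gives $\widehat{SW}(g^*)=b_1+b_2$ and $\max_{g:g(i)\notin M}\widehat{SW}(g)=b_j$ (the other agent stays at $p_{\min}=p$ with $q_j=1$), so $\pi_i=\hat v_i-\delta_i=b_i-b_i=0$ for arbitrary declared bids. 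Combined with the $2r$-revenue IR benchmark, this gives $\mathrm{PoS}=\infty$.

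For \igsp{}, the payment formula at $n=m=2$ reduces to: the slot-$1$ agent pays $q_2(p_2,p_{\min})b_2$ while the slot-$2$ agent pays $0$ (vacuously, since there is no unassigned agent). A Bertrand-style argument then forces $b_{\text{slot }2}=0$ at every NE: otherwise the slot-$1$ agent would strictly prefer to underbid $\epsilon$ below the slot-$2$ agent, swap positions, and save the strictly positive payment while preserving her true value $r$. Hence every NE has revenue $0$, and an explicit witness is $(p_i,b_i)=(r,0)$ for both $i$, easy to verify against all unilateral deviations in $(p_i,b_i)$. The same IR benchmark yields $\mathrm{PoS}=\infty$ for \igsp{} as well.

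The main obstacle is the exhaustiveness of the equilibrium enumeration in \ivcg{}, particularly ruling out profiles in which the declared $b_i$ differs from the true gain $\alpha_i(p_i-c_i)$ (overbidding or underbidding). The key simplification is that $\lambda_1=\lambda_2$ combined with the only-min structure makes the externality $\delta_i$ collapse identically to $\hat v_i$ on the symmetric price manifold, independently of the declared bids, so no clever declaration can revive a positive VCG payment; once this identity is established, the conclusion is immediate.
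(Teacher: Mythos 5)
Your equilibrium analysis of \ivcg{} and \igsp{} on the symmetric only-min instance is essentially sound (and your key observation---that on the symmetric price manifold the externality term $\delta_i$ collapses to $\hat{v}_i$ and kills every VCG payment regardless of the declared bids---is the same engine that drives the paper's own proof), but the instance is too symmetric for the \emph{benchmark} side of the ratio, and that is where the argument breaks. On your instance the direct mechanism \dvcg{} also collects zero revenue: with both agents at the jointly optimal price $r$, removing agent $i$ leaves agent $j$ alone still at her standalone-optimal price $r$ with unchanged quality, so $\Delta_i = 2r - r = \hat{v}_i$ and $\pi_i = 0$. The only positive benchmark you can then invoke is a full-surplus-extraction mechanism that charges each agent $\hat{v}_i$ and whose revenue of $2r$ is realized only ``under truthful reporting''---reporting that this mechanism gives agents no incentive to provide (facing it, they would declare $b_i=0$ and pay nothing). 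The paper's informal definition of the revenue PoS (``maximum revenue achievable by an individually-rational mechanism'') is, as its own proof and the footnote to Table~\ref{table:inefficiencybounds} indicate, meant to be instantiated with a mechanism such as \dvcg{} whose stated revenue is actually attained under incentive-respecting behavior. This is precisely why the paper's construction is asymmetric: $q_2$ is built from the hyperbola $\psi$ so that agent~$2$'s optimal standalone price ($\underline{p}$) differs from her price in the jointly optimal allocation ($\overline{p}$), which yields $\Delta_1 = (1+\delta)\overline{p} - \underline{p} < \hat{v}_1$ and hence a strictly positive \dvcg{} payment $\pi_1 = \underline{p} - \delta\overline{p}$, while the indirect mechanisms still earn zero at every equilibrium. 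Your symmetric instance cannot reproduce this gap, so it does not establish $\mathrm{PoS}=\infty$ under the intended reading of the benchmark.

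Two smaller points. For \ivcg{} you never exhibit an equilibrium: the PoS is a minimum over Nash equilibria, so you must show at least one exists (the truthful symmetric profile $(r,r)$ works, but it needs to be checked against deviations that lower the price and overbid). For \igsp{}, your witness $(r,0)$ for both agents relies on the tie-breaking assumption that an agent declaring $b_i=0$ is still allocated---the paper states this assumption explicitly, and without it the Bertrand dynamics you describe need not terminate in an equilibrium at all.
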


In the specific case of \ivcg{} and $m=1$, we have a positive result for PoS (PoA is trivially $\infty$ as it is $\infty$ even in second-price single-item auctions).
\begin{restatable}{theorem}{thmeleven}
The PoS for revenue of \ivcg{}  with respect to the mechanism \dvcg{} is $1$ if $m = 1$.
\end{restatable}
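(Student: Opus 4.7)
The plan is to show $\text{PoS} = 1$ by exhibiting a Nash equilibrium of \ivcg{} whose revenue matches that of \dvcg{}, and noting that under no overbidding (as in Table~\ref{table:inefficiencybounds}) no equilibrium of \ivcg{} can strictly beat the \dvcg{} benchmark in the single-slot case. For each agent $j$, let $p_j^\star \in \arg\max_{p\ge 0} q_j(p,p)\,\alpha_j\,(p - c_j)$ be her standalone-optimal price (well-defined by the modelling assumption), and put $V_j = q_j(p_j^\star, p_j^\star)\,\alpha_j\,(p_j^\star - c_j)$. Because only one ad is displayed when $m = 1$, we have $p_{\min} = p_j$ whenever agent $j$ is the displayed one, so \dvcg{} on truthful types picks $i^\star = \arg\max_j V_j$ at price $p_{i^\star}^\star$ and charges the VCG payment $V_{(2)} := \max_{j \neq i^\star} V_j$, which is therefore the benchmark revenue.

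I would then analyse the strategy profile in which every agent $j$ submits $(p_j, b_j) = (p_j^\star, \alpha_j(p_j^\star - c_j))$. Under this profile, \ivcg{} ranks agents exactly by $V_j$, allocates the slot to $i^\star$, and charges her $V_{(2)}$; allocation and revenue therefore coincide with \dvcg{}. To verify the equilibrium condition, I would check: for the winner $i^\star$, any alternative submission either leaves her VCG payment unchanged---since in the single-slot VCG it depends only on the other agents' bids---while the realised value $q_{i^\star}(p,p)\,\alpha_{i^\star}(p - c_{i^\star})$ is maximised at $p_{i^\star}^\star$, or else makes her unallocated, producing utility $0 \le V_{i^\star} - V_{(2)}$. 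For any loser $k$, becoming allocated requires $q_k(p_k, p_k)\, b_k > V_{i^\star}$; but under no overbidding the left-hand side is at most $q_k(p_k, p_k)\,\alpha_k(p_k - c_k) \le V_k \le V_{i^\star}$, so no allocation-changing deviation is feasible and her utility remains $0$. The same no-overbidding inequality applied to all agents in an arbitrary Nash equilibrium shows that the winner's VCG payment is always upper bounded by $V_{(2)}$, so no equilibrium revenue can exceed the \dvcg{} benchmark; combined with the matching equilibrium above, this gives $\text{PoS} = 1$.

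The main obstacle I expect is the loser's deviation analysis, which must handle joint manipulations of both the declared price and the declared gain: a change in $p_k$ simultaneously perturbs the loser's quality $q_k(p_k, p_k)$ and, if she became allocated, the notion of $p_{\min}$ used to compute other agents' counterfactual values. The argument closes cleanly once one observes that in the single-slot case the \ivcg{} payment charged to a winner is a function only of the remaining agents' submissions, so the deviator's own bid cannot lower it below the value $V_{i^\star}$ committed by $i^\star$ in the proposed profile, while the no-overbidding hypothesis caps the deviator's true value from winning at $V_k \le V_{i^\star}$, ruling out profitable takeovers.
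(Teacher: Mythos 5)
Your proposal is correct and follows essentially the same route as the paper: exhibit the profile in which agents bid their true gains at the prices \dvcg{} would select, and observe that with a single slot the \ivcg{} payment coincides with the \dvcg{} payment. One point where you are actually more careful than the paper's one-line argument: the equilibrium constructed in the paper's proof of Theorem~\ref{thm:indirect} has unallocated agents bid $(0,0)$, which with $m=1$ would drive the winner's second-price payment (and hence the revenue) to zero; your profile, in which the losers also submit their standalone-optimal prices and true gains, is the one that actually reproduces the \dvcg{} payment $V_{(2)}$, and your no-overbidding check that no loser can profitably take the slot closes the equilibrium verification. The additional converse direction you give (no equilibrium revenue exceeds $V_{(2)}$) is not in the paper but is a harmless and correct strengthening.
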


 
 Instead, the above positive result does not hold with \igsp{}, as stated below.
\begin{restatable}{theorem}{thmtwelve}
The PoS for the revenue of \igsp{} is $\infty$ even if $m = 1$ when agents do not overbid.
\end{restatable}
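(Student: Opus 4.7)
The plan is to exhibit a single-slot instance with $n=2$ agents in which every Nash equilibrium of \igsp{} has zero revenue while \dvcg{} on truthful inputs extracts strictly positive revenue, making the PoS ratio unbounded. The feature of \igsp{} to exploit is the exclusion clause in Equation~\ref{eq:last}: the (single) non-winning agent contributes to the winner's payment only if her declared price is at least $p_{\min}$, and the chosen instance will force her out by this clause in every equilibrium.

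Concretely, set $\lambda_1 = 1$, take two agents with $\alpha_1 = \alpha_2 = 1$ and $c_1 = c_2 = 0$, and pick step-function qualities $q_1(p_1, p_{\min}) = 1$ for $p_1 \le A$ and $0$ otherwise, and $q_2(p_2, p_{\min}) = 1$ for $p_2 \le B$ and $0$ otherwise, with constants $A > B > 0$. These functions are (non-strictly) monotone in both arguments and admit finite standalone price maximizers, as required by the model. Under \dvcg{} on truthful inputs, the welfare-maximizing allocation displays agent 1 at $p_1 = A$ (welfare $A$), and the best allocation that excludes agent 1 displays agent 2 at $p_2 = B$ (welfare $B$); so agent 1's VCG payment equals $A - (A-B) = B > 0$, giving a benchmark revenue of at least $B$.

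For the equilibrium side, the key observation is that under no-overbidding agent 2's maximum possible declared value $q_2(p_2, \cdot)\, b_2$ is at most $B$, because either $p_2 > B$ and $q_2 = 0$, or $p_2 \le B$ and $b_2 \le p_2 \le B$. So the deviation $(p_1, b_1) = (A, A)$ by agent 1 strictly wins the slot against any bid of agent 2, makes $p_{\min} = A > B$, and, since $p_2 \le B$ is necessary for agent 2 to contribute positively, forces agent 2 out of the max in Equation~\ref{eq:last}; hence the payment is $0$ and agent 1's utility is $A$. By contrast, any bid by agent 1 with $p_1 \le B$ that wins yields utility at most $p_1 \le B < A$ (the payment is nonnegative), and any losing bid yields utility $0$. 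Consequently, in every NE agent 1 must win with some $p_1 > B$, and the same exclusion argument then pins the payment at zero.

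Combining, every NE of \igsp{} has revenue zero while an IR mechanism achieves revenue at least $B > 0$, so the PoS for the revenue is infinite. The main obstacle is the equilibrium characterization, in particular ruling out equilibria where agent 1 underbids or loses; the profitable deviation $(A, A)$ handles both cases uniformly in agent 2's strategy, so the verification reduces to a direct application of the exclusion clause in Equation~\ref{eq:last}.
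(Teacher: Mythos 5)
Your proposal is correct and takes essentially the same route as the paper: a two-agent, one-slot instance with step-function qualities cutting off at a low price for agent~2 and a high price for agent~1 (the paper's $\underline{p},\overline{p}$ are your $B,A$), so that in every equilibrium agent~1 wins above $B$ and the $p_j \ge p_{\min}$ clause in Equation~\ref{eq:last} zeroes her payment, while \dvcg{} extracts $B>0$. Your equilibrium characterization via the uniform deviation $(A,A)$ is in fact spelled out more carefully than the paper's rather terse version.
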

In the proof of this theorem we strongly rely upon the definition of GSP payments described above, which restricts payments to depend only on agents submitting a price at least $p_{\min}$. This payment format turns out to be necessary in order to guarantee individual rationality. We leave open the problem of understanding if a better Price of Stability for the revenue of {\igsp} would be possible by considering alternative non-individually rational GSP payments.

%
%
%
%

%

\section{A Better PoS for the Revenue with Indirect-revelation Mechanisms}

As discussed in the previous section, indirect-revelation mechanisms present major weaknesses in terms of efficiency. 
A natural question is whether we can design indirect-revelation mechanisms with a better efficiency when agents can choose their price.
In particular, we focus on \ivcg{}, as it always guarantees $\text{PoS} = 1$ for the social welfare, and we show that a simple modification of the mechanism leads to $\text{PoS} = 1$ for the revenue when some assumptions hold.
We call this new mechanism \ivcgstar.
The rationale is to ask agents for more information.
More precisely, the input provided by every agent is a triple composed of $(b_i, p_i, p_i^*)$ where $(b_i,p_i)$ is the input to \ivcg{} and $p^*_i$ is the price that advertiser~$i$ would choose when her ad is the only displayed ad. 
The property that $\text{PoS} = 1$ is guaranteed when function $q_i(p_i,p_i)$ is differentiable in $p_i$ and non-zero in $p_i^*$.
Mechanism \ivcgstar{} is defined as follows:
\begin{enumerate}
\item every agent $i$ submits a bid $(b_i,p_i,p^*_i)$, where $b_i,p_i$, and $p_i^*$ are defined as above;
\item the mechanism infers the values of $c_i$ and $\alpha_i$  for every agent~$i$ as follows: $\hat{c}_i=q(p^*_i,p^*_i)\,/\,\frac{dq(p_i,p_i)}{dp_i}\big\vert_{p_i=p_i^*}+p^*_i$, and $\hat{\alpha}_i=\frac{b_i}{p_i-\hat{c}_i}$ if $p_i\neq \hat c_i$ and $\hat \alpha_i=0$ otherwise;
\item the mechanism computes an auxiliary allocation, say $\bar{f}$, by using the allocation function of \ivcg{} when the input is $(b_i,p_i)$ for every agent~$i$; the corresponding social welfare (evaluated with the declared gain $b_i$) is $\widehat{\SW}$;
%
\item for every agent $i$, the mechanism computes an auxiliary allocation, say $\bar{f}^{-i}$, by using the allocation function of \dvcg{} when the values inferred above for $\{\hat{\alpha}_h\}_{h \in N}$ and $\{\hat{c}_h\}_{h \in N}$ are provided in input and agent~$i$ is removed from the optimization problem.
For every maximization, we denote with $\overline{\SW}^{-i}$ the corresponding social welfare evaluated with the inferred values $\{\hat \alpha_h\}_{h \in N}$ and $\{\hat c_h\}_{h \in N}$.
Notice that, as it happens with \dvcg, the prices in output to these maximizations can be different from those agents provide in input;
\item if $\widehat{\SW} \geq \max_i \overline{\SW}^{-i}$, then the mechanism chooses allocation $\bar{f}$ and charges every agent~$i$ of a payment $\pi_i = \overline{\SW}^{-i} - (\widehat{\SW} - \lambda_{\bar{f}(i)}\, q_i(p_i,p_{\min}) \, b_i)$, else no ad is allocated and every agent is charged a payment of zero.
\end{enumerate}
Basically, mechanism \ivcgstar{} exploits the additional information asked to the agents to infer their types and then uses this information to compute the same payments that \dvcg{} would charge.
Step~5 is necessary to guarantee individual rationality. More precisely, since the allocation $\bar{f}$ is computed as the indirect mechanism does (without optimizing over prices), while the payments $\{\pi_i\}_{i \in N}$ are computed as the direct mechanism does (optimizing over prices), individual rationality may not be satisfied. 
We solve this problem setting the payments to $0$ (and allocating no ads) when the payments $\{\pi_i\}_{i \in N}$ are too large.
As a side effect, we have that if the submitted prices are different from the optimal one, it is possible that the mechanism does not assign any slot. Thus, the PoA for the social welfare and revenue can be unbounded. 
\begin{restatable}{theorem}{ivcgstarthm}
\label{theorem:ivcgstar}
	Mechanism \ivcgstar{} is individually rational and weakly budget-balanced. Moreover, the PoS for the revenue of \ivcgstar{} is 1.
\end{restatable}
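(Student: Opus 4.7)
The plan is to prove the three claims in succession: individual rationality and weak budget balance follow from a direct inspection of Step~5, while $\text{PoS}=1$ for the revenue (measured w.r.t.\ \dvcg{}, as elsewhere in the paper) is established by exhibiting a specific Nash equilibrium of \ivcgstar{} whose outcome reproduces that of \dvcg{} run on truthful types. For individual rationality, the Step~5 condition $\widehat{\SW}\geq\max_i\overline{\SW}^{-i}$ immediately yields $\pi_i=\overline{\SW}^{-i}-\widehat{\SW}+\hat v_i\leq\hat v_i$ whenever the mechanism allocates, and both $\pi_i$ and $\hat v_i$ vanish otherwise. For weak budget balance, the same payment rewrites as $\pi_i=\overline{\SW}^{-i}-\sum_{j\neq i}\hat v_j(\bar f,\p,b_j)$; since Step~2 enforces $b_j=\hat\alpha_j(p_j-\hat c_j)$, the sum is the inferred-type welfare of a specific allocation/price pair that is feasible for the problem defining $\overline{\SW}^{-i}$ (namely $\bar f$ with slot $\bar f(i)$ left empty and the declared prices), and is therefore at most $\overline{\SW}^{-i}$. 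Hence $\pi_i\geq 0$.

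For the PoS claim, let $\bar f$ and $\bar\p$ denote the allocation and prices that \dvcg{} outputs on the truthful types, and consider the profile in which every agent~$i$ reports the triple $\big(\alpha_i(\bar p_i-c_i),\,\bar p_i,\,p_i^\ast\big)$, where $p_i^\ast$ is the true standalone-optimal price in the sense of Step~1. The first-order condition characterising $p_i^\ast$ reads $c_i=p_i^\ast+q_i(p_i^\ast,p_i^\ast)\big/\frac{dq_i(p,p)}{dp}\big|_{p=p_i^\ast}$, which coincides with the inference formula of Step~2; the hypotheses that $q_i(p,p)$ is differentiable at $p_i^\ast$ and that $q_i(p_i^\ast,p_i^\ast)\neq 0$ are exactly what makes this inversion well defined, so $\hat c_i=c_i$ and $\hat\alpha_i=\alpha_i$ for every~$i$. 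Consequently the \ivcg{} allocation at prices $\bar\p$ coincides with $\bar f$, each $\overline{\SW}^{-i}$ equals the \dvcg{} optimal welfare with $i$ removed, the Step~5 check passes (removing an agent cannot increase the optimal welfare), and the payments collapse to the \dvcg{} payments. The revenue at this profile therefore equals that of \dvcg{}.

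To certify that this profile is a Nash equilibrium I would rewrite $u_i=v_i-\pi_i$ as $u_i=A_i(\bar f,\p)-\overline{\SW}^{-i}$, with $A_i(\bar f,\p)=v_i(\bar f,\p)+\sum_{j\neq i}\hat v_j(\bar f,\p,b_j)$ the welfare of $\bar f$ computed using agent~$i$'s \emph{true} gain and the other agents' declared gains. A unilateral deviation of $i$ leaves $\overline{\SW}^{-i}$ untouched (it depends only on the other agents' triples) and either triggers the defensive branch of Step~5 (yielding $u_i=0$) or affects $u_i$ only through $A_i(\bar f,\p)$. Since the other agents keep reporting truthfully, $A_i(f,\p)$ equals the true social welfare of $f$ at the input prices $\p$, so $A_i(\bar f,\p)\leq\max_f A_i(f,\p)\leq\widehat{\SW}(\bar f,\bar\p,\bp)$, the \dvcg{} optimum, which additionally maximises over prices; at the candidate profile this bound is attained, so the truthful triple is a best response. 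The standard VCG-type inequality $\widehat{\SW}(\bar f,\bar\p,\bp)\geq\overline{\SW}^{-i}$ further ensures the equilibrium utility is non-negative, so the ``no-allocation'' branch cannot be a profitable deviation either.

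The main obstacle is that \ivcgstar{} decouples allocation (done in the style of \ivcg{}, at the submitted prices only) from payment (computed in the style of \dvcg{}, on inferred types and jointly optimising over prices), so the two pieces live on different optimisation landscapes and Step~5 must be deployed defensively to preserve IR. The key unlocking observation for the equilibrium verification is that at the candidate profile the declared gain coincides with the true gain and the inferred types coincide with the real ones, which is exactly what collapses the \ivcg{} allocation onto $\bar f$ and lets the standard VCG incentive argument carry over despite the decoupling.
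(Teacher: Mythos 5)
Your proposal is correct and follows essentially the same route as the paper's proof: individual rationality and weak budget balance by direct inspection of the Step~5 payment together with the optimality of $\overline{\SW}^{-i}$, and $\mathrm{PoS}=1$ by exhibiting the profile of truthful gains, \dvcg{} prices, and true standalone-optimal prices $p_i^*$, verifying via the first-order condition that the inferred types coincide with the real ones, and certifying the equilibrium with the same VCG-style decomposition of $u_i$ used in the paper's Theorem~\ref{thm:indirect}. The only differences are presentational (you spell out the feasibility argument behind $\pi_i\geq 0$ and the $A_i$ rewriting more explicitly than the paper does).
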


We recall that the algorithm we provide to find the best allocation with \ivcg{} works when the values that $p_i$ can assume are discrete, and the same holds with \ivcgstar{}.
We also notice that \ivcgstar{} requires that $p^*_i$ is not restricted to a set of discrete values, the mechanism could not infer the exact values of $\alpha_i$ and $c_i$ otherwise.
However, requiring price $p_i$ to belong to a finite, discrete set of values and price $p_i^*$ to belong to $\mathbb{R}_{\geq 0}$ does not modify the properties of the mechanism since $p^*_i$ is not used in the allocation algorithm.

%
%

\section{Conclusions and Future Work}

In this paper, we investigate how displaying prices together with ads affects the users' behavior and the properties of auction mechanisms. 
Since the goods sold by the agents are similar, a high competition among the agents arises from the price comparison.
%
Technically speaking, the prices introduce externalities as the probability with which a user clicks on an ad depends on the price of that ad and on the prices of the other displayed ads. 
Interestingly, the social welfare can be maximized when a direct-revelation mechanism jointly optimizes over the ad allocation and the prices, and we show that this can be done in polynomial time when the prices can assume a finite set of values.
However, in practice, it is unlikely that advertisers would allow the mechanism to choose prices on their behalf and, in commonly-adopted mechanisms, ads allocation and price optimization are decoupled, so that the advertisers optimize prices and bids, while the mechanism does so for the allocation, once prices and bids are given.
We show that this decoupling makes standard mechanisms with VCG and GSP payments highly inefficient in terms of PoA and PoS for social welfare and revenue. 
%
%
Finally, we investigate whether we can reduce the inefficiency of mechanisms in which the advertisers optimize prices and bids.
We show that we can obtain PoS of 1 for the revenue by a simple modification of the mechanisms.
In particular, we ask the advertisers for an additional price that the mechanism exploits to infer the values of some advertisers' parameters.
Such a modification can be easily implemented in practice without agents revealing their private, sensitive information.

Many research directions can be explored in future.
Probably, the most interesting concerns how the bidding strategies commonly adopted for standard ad auctions without prices can be extended to our case.
In particular, the crucial question is whether, as in the case of the standard GSP without prices, there are bidding strategies converging to notable Nash equilibria.
Other interesting questions concern the analysis of PoA and PoS and the design of allocation algorithms when the quality functions satisfy specific properties, such as, \emph{e.g.}, smoothness.


\section*{Acknowledgments}	
This work has been partially supported by the Italian MIUR PRIN 2017 Project ALGADIMAR ``Algorithms, Games, and Digital Market''.

\bibliography{biblio}

\clearpage

\appendix

\section{Omitted Proofs}

\thmfive*

\begin{proof}
We will next show a setting for which it occurs that, with GSP payments, all equilibrium bids make the mechanism to allocate agents with very low prices, implying a corresponding low social welfare, where the optimal allocation only allocates agents with high prices.

 \textsc{Setting}.
For $\varepsilon > 0$, consider the following setting:
\begin{itemize}
\item $n = 3$, $m=2$, $P= [\underline{p}, \overline{p}]$
with $\underline{p} > \varepsilon$ and
$\overline{p} = \frac{3}{2}(\underline{p}-\varepsilon)$;
\item for every agent $i$ we have
$$
q_i(p_i,p_{\min}) = 
\begin{cases}
1	          & 	\text{if } p_i = p_{\min},\\
0	          & 	\text{otherwise};
\end{cases}$$
\item $c_1 = 0$, $c_2 = c_3 = \underline{p}-\varepsilon$;
\item for every agent~$i \in N$, $\alpha_i = 1$;
\item  for every slot $j \in M$, $\lambda_j = 1$;
\item ties are broken in favour of agent $1$.
\end{itemize}

Observe that for every $\underline{p} \leq p_1 < p_2 \leq \overline{p}$, it holds that
\begin{equation}
\label{eq:1wins}
 p_1 - c_1 \geq \underline{p}-c_1 = \underline{p} > \underline{p}-\varepsilon \geq 2(p_2-c_2),
\end{equation}
i.e. the social welfare achieved by displaying agents $1$ at price $p$ is larger than what we achieve by displaying agents $2$ and $3$ at price $p_2$. In other words, the mechanism always chooses the price submitted by agent 1 as minimum price.

\textsc{Best Social Welfare with {\igsp}.}
Let $(p_1, p_2, p_3)$ and $(b_1, b_2, b_3)$ be the price and gains given in input to the mechanism. Suppose that they form a Nash equilibrium. We will next provide a characterization of these values.

First observe that, in equilibrium, it must be the case that $b_1$ is large enough to allow ad 1 to be displayed. Indeed, if this is not the case, then agent 1 would have an incentive to submit the true gain, and thus, by \eqref{eq:1wins} and the no overbidding assumption, to be displayed and to achieve a strictly positive utility.

Suppose first that $p_1 = p > \underline{p}$. 
We next show that in this case there is at least one agent $i \in \{2, 3\}$ such that $p_i = p$ and $b_i = p - c_i$ (a larger declared gain would not be possible because of the no overbidding assumption). Suppose indeed that this is not the case. If both agents have $p_i \neq p$, then they must have $0$ utility (if $p_i > p$, their value must be $0$ because of the quality function, and if $p_i < p$ they are not displayed since otherwise they will zeroth the value of agent $1$). However, if one of these agents submits price $p$ and the corresponding true gain, she would be displayed and achieve strictly positive utility, regardless of $b_1$ (if $b_1 \geq p-c_2$, the payment assigned to the deviating agent is 0, and if $b_i < p-c_2$, then the payment will be less than the value for being displayed at that price.)
Suppose then that there is agent $i \in \{2, 3\}$ with $p_i = p$ but $b_i < p - c_i$ and agent $j = 5-i$ with either $p_j \neq p$ or $b_j < p - c_j$. Note that the ad of one of these agents, say, wlog, $j$, is not displayed. Then $j$ has an incentive to submit price $p$ and gain $p - c_j$, since it would assure that her ad will displayed and she receives strictly positive utility.

Next we prove that $b_1 \geq p-c_2$, and thus agent 1 is assigned the first slot (because of the tie-breaking rule). Suppose instead that $b_1 < p - c_2$, and let $i \in \{2, 3\}$ be the one agent with $p_i = p$ and $b_i = p-c_i$. Since, as observed above, $b_1$ must be large enough to have that ad 1 is displayed, this means that either $p_j \neq p$ or $b_j \leq b_1$, with $j = 5-i$. However, as showed above, $j$ has an incentive to deviate by submitting price $p$ and gain $b_j \in (b_i, b_1)$.

Hence, if $p_1 = p > \underline{p}$, then agent 1 will be displayed in the first slot and will be assigned a payment $p-c_2$. Hence, her utility is $c_2 = \underline{p}-\varepsilon$. We will next show that agent $1$ has then an incentive to deviate from this equilibrium. Specifically, let $i \in \{2, 3\}$ be the agent submitting price $p$ and gain $p-c_i$. We distinguish two cases based on $p_j$ and $b_j$, where $j = 5-i$: if $p_j > \underline{p}$ or $b_j < \underline{p} - c_2 = \varepsilon$, then agent 1 has an incentive to submit price $\underline{p}$ and the corresponding true gain, being allocated in the first slot and being assigned a payment of at most $b_j$, resulting in an utility $\underline{p} - b_j > \underline{p} - \varepsilon$; if $p_j = \underline{p}$ and $b_j = \varepsilon$ ($b_j$ cannot be larger because of the no overbidding assumption), then agent 1 has an incentive to submit price $\underline{p}$ and gain $b_1 < \varepsilon$, being allocated in the second slot and receiving a null payment, resulting in utility $\underline{p} > \underline{p} - \varepsilon$.

We can then conclude that in an equilibrium $p_1 = \underline{p}$ and ad $1$ must be displayed, that implies that every equilibrium cannot have social welfare larger than $(\underline{p} - c_1) + (\underline{p} - c_2) = \underline{p}+\varepsilon$.

\textsc{Social Welfare of the Optimal Allocation}.
The optimal allocation will display agents $1$ and $2$ at price $\overline{p}$.
Hence, the optimal social welfare is $(\overline{p}-c_1)+(\overline{p}-c_2)=\frac{3}{2}(\underline{p}-\varepsilon) + \frac{3}{2}(\underline{p}-\varepsilon) - \underline{p} + \varepsilon = 
2(\underline{p}-\varepsilon)$.
Hence, the price of Stability is $\frac{2(\underline{p}-\varepsilon)}{\underline{p}+\varepsilon}$ that goes to $2$ as $\varepsilon$ goes to 0.
\end{proof}

\thmsix*

\begin{proof}
 When a single slot is available, the value of displayed agent $i$ is $\lambda_1 q_i(p_i, p_i) \alpha_i(p_i - c_i)$, where $p_i$ is the corresponding displayed price. That is, this value does not depend on the prices submitted by other agents. Let $\tilde{v}_i = \max_p \lambda_1 q_i(p, p) \alpha_i(p - c_i)$ and set $\tilde{p}_i$ to any price $p$ such that $\lambda_1 q_i(p, p) \alpha_i(p - c_i) = \tilde{v}\}$. Finally, sort agents in order of $\tilde{v}_i$, so that $\tilde{v}_1 \geq \tilde{v}_2 \geq \cdots \geq \tilde{v}_n$. Note that if these values are all equals, then, regardless of the displayed agent, the mechanism always maximizes the social welfare. Suppose instead that there are at least two different values. Let $k$ be the first index such that $\tilde{v}_k > \tilde{v}_{k+1}$. Then we claim that in any equilibrium one agent $i \leq k$ must be displayed, otherwise she has the incentive to submit price $\tilde{p}_i$ and the corresponding true gain. This indeed causes the mechanism to display ad $i$, that provides this agent with a value $\tilde{v}_i$, and to assign a payment (both in case of VCG and GSP payments) that is at most $\tilde{v}_{k+1}$ (because of the non-overbidding assumption), resulting in this way in a positive utility.
\end{proof}

\thmseven*

\begin{proof}
The proof is based on the following setting, in which the ratio between the social welfare of the optimal allocation and the social welfare of the allocation achievable in the worst Nash equilibria in \ivcg{} and \igsp{} is exactly~$m$.

\textsc{Setting}.
Consider the following setting:
\begin{itemize}
\item $n=m+1$;
\item for every agent~$i \in N$, 
\[q_i(p_i,p_{\min}) = 
\begin{cases}
1	& 	\text{if } p_i \leq \overline{p} \text{ and } p_i = p_{\min}	\\
0	& 	\text{otherwise}
\end{cases};\]
\item for every agent~$i \in N$, $c_i = 0$;
\item for every agent~$i \in N$, $\alpha_i = 1$;
\item  for every slot $j \in M$, $\lambda_j = 1$.
\end{itemize}

\textsc{Social welfare of the optimal allocation}.
One of the allocations maximizing the social welfare is such that $f(i)=i$ for every $i \in N, i \neq m+1$ and $f(m+1) = \bot$, while $p_i = \overline{p}$ for every $i \in N$. 
The optimal social welfare $\SW$ is $m \, \overline{p}$.

\textsc{Social welfare with \ivcg{}}.
Define $\underline{p} = \overline{p} / m$. 
Consider the case in which, for every $i \in N$, it holds $p_i = \underline{p}$ and $b_i = \underline{p}$ and therefore every agent is declaring her true gain. 
For every $i \in N$, we have that $q_i = 1$, $p_i = \underline{p}$, and $u_i = 0$, as payment $\pi_i$ equals the expected value $v_i$.
This strategy profile leads to a social welfare $\SW = m \, \underline{p} = m \, \overline{p} / m = \overline{p}$.
In the following, we show that such a strategy profile is a Nash equilibrium of the full-information game in which the payments are VCG-like, thus proving the theorem.
Initially, we analyze possible deviations to values of $b_i$ different from $\underline{p}$ when keeping $p_i= \underline{p}$, showing that no deviation allows agents to strictly increase their utility.
Since $b_i = \underline{p}$ is the true gain of agent~$i$ from being allocated and agents are assumed not to overbid, no agent~$i$ declares a gain larger than $\underline{p}$.
Furthermore, declaring a gain strictly smaller than $\underline{p}$ is a weakly dominated strategy.
Indeed, agent~$i$ with $f(i)\in M$ would be not displayed by declaring a gain smaller than $\underline{p}$, while agent~$i$ with $f(i)= \bot$ keeps not be displayed when declaring a gain less than $\underline{p}$.
Now, we analyze possible deviations to values of $p_i$ different from $\underline{p}$. 
We consider the restricted case in which, in the deviation, agent~$i$ changes both $b_i$ and $p_i$ such that $b_i = p_i$, discussing below that no deviation with $b_i < p_i$ is useful for agent~$i$.
Notice that, by setting $b_i = p_i$, an agent is declaring exactly her gain, and, therefore, any strictly larger declared gain would correspond to overbidding.
For every $p_i > \underline{p}$, we have that $q_i=0$ if ad~$i$ is displayed together other ads, as the other ads have a price strictly smaller than $p_i$.
Thus, either ad~$i$ is displayed alone in the allocation, to guarantee that $p_i$ is the minimum price of this new allocation and therefore that $q_i >0$, or ad~$i$ is not displayed as we can allocate $m$ ads each with a strictly positive value. 
Under the assumption that ties are opportunely broken, ad~$i$ is displayed alone only if her gain (\emph{i.e.}, $p_i$) is strictly larger than the cumulative gain of the other ads (\emph{i.e.}, $m\, \underline{p}$).
This never happens as, by construction, $p_i \leq m\, \underline{p} = \overline{p}$, otherwise (\emph{i.e.}, for $p_i > \overline{p}$) the value of $q_i$ would be $0$, and therefore $b_i\leq m\, \underline{p}$, not allowing ad~$i$ to be displayed.
Notice that the same happens when, in the deviation, agent~$i$ underbids making $b_i < p_i$.
Finally, by arguments similar to those used above, if $p_i < \underline{p}$, ad~$i$ is either displayed alone or not displayed. 
As above, when $b_i \leq p_i< \underline{p}$, agent~$i$ cannot be displayed as her gain cannot be larger than the cumulative gain of the other agents.

\textsc{Social welfare with \igsp{}}.
Consider the case in which, for every $i \in N$, $p_i = \underline{p}$ and $b_i = \underline{p}$. 
For the same arguments used above for \ivcg, such a strategy profile is a Nash equilibrium, thus leading to a social welfare that is $1/m$ of the optimal social welfare.
This concludes the proof.
\end{proof}

\thmeight*

\begin{proof}
For the sake of presentation, we introduce the following notation:
\begin{itemize}
\item we denote the maximum value agent~$i$ can get with $v_i^* = \lambda_1 \,\max_{p_i}\{q_i(p_i,p_i) \, \alpha_i \, (p_i -c_i)\}$;
\item we denote the corresponding optimal price with $p_i^*$;
\item we denote the corresponding true gain with $b_i = \alpha_i\,(p^*_i - c_i)$;  
\item we denote the allocation in a Nash equilibrium with $f^{\text{NE}}$;
\item we denote the declared gain used by agent~$i$ in the Nash equilibrium with $b_i^{\text{NE}}$;
\item we denote the price used by agent~$i$ in the Nash equilibrium with $p^{\text{NE}}_{i}$;
\item we denote the minimum price among those of the displayed ads in the Nash equilibrium with $p^{\text{NE}}_{\min}$;
\item we denote the value agent~$i$ gets in the Nash equilibrium with $v_i^{\text{NE}} = \lambda_{f^{\text{NE}}(i)}\,q_i(p_i^{\text{NE}},p_{\min}^{\text{NE}})\,\alpha_i\,(p_i^{\text{NE}}-c_i)$;
\item we denote the payment of agent~$i$ in the Nash equilibrium with $\pi_i^{\text{NE}}$;
\item we denote the social welfare in a Nash equilibrium with $\SW^{\text{NE}} = \sum_{i \in N}v_i^{\text{NE}}$;
%
%
%
\item we denote the optimal social welfare when agent~$i$ is discarded evaluated by $\{b_i^{\text{NE}}\}_{i \in N}$ with $\widehat{\SW}_{-i}$.
\end{itemize}
Initially, we prove that, for every Nash equilibrium and agent $i \in N$, it holds $v_i^* \leq \SW^{\text{NE}}$.
According to the definition of the VCG payments, the utility of agent~$i$ in a Nash equilibrium can be written as:
\begin{multline*}
\underbrace{\lambda_{f^{\text{NE}}(i)}\, q_i(p_i^{\text{NE}},p_{\min}^{\text{NE}})\, \alpha_i\, (p_i^{\text{NE}}-c_i)}_{v_i^{\text{NE}}} - \\ \underbrace{\widehat{\SW}_{-i}+  \sum_{h \neq i} \lambda_{f^{\text{NE}}(h)}\, q_h(p_h^{\text{NE}},p_{\min}^{\text{NE}})\, b_h^{\text{NE}}}_{\pi_i^{\text{NE}}}= 
\end{multline*}
\begin{multline*}
\Bigg(\lambda_{f^{\text{NE}}(i)}\, q_i(p_i^{\text{NE}},p_{\min}^{\text{NE}})\, \alpha_i\, (p_i^{\text{NE}}-c_i) + \\ \sum_{h \neq i} \lambda_{f^{\text{NE}}(h)}\, q_h(p_h^{\text{NE}},p_{\min}^{\text{NE}})\, b_h^{\text{NE}}\Bigg) -  \Bigg(\widehat{\SW}_{-i}\Bigg).
\end{multline*}
We call $\widehat{\SW}_{-i}$ as negative-utility term and the remaining part as positive-utility term.
%
Since agents are not allowed to overbid, we have that the positive-utility term is $\leq \SW^{\text{NE}}$.
Notice that negative-utility term $\widehat{\SW}_{-i}$ is a constant for every deviation of agent~$i$ from $(b_i^{\text{NE}},p_i^{\text{NE}})$, not depending on $(b_i,p_i)$.
This means that agents aim at maximizing the positive-utility term.
Therefore, there is no pair $(b_i,p_i)$ that agent~$i$ can play leading to a different allocation $f$ and/or minimum price $p_{\min}$ providing a positive-utility term strictly larger than $\lambda_{f^{\text{NE}}(i)}\, q_i(p_i^{\text{NE}},p_{\min}^{\text{NE}})\, \alpha_i\, (p_i^{\text{NE}}-c_i) + \sum_{h \neq i} \lambda_{f^{\text{NE}}(h)}\, q_h(p_h^{\text{NE}},p_{\min}^{\text{NE}})\, b_h^{\text{NE}}$, otherwise agents would not play a Nash equilibrium.
Among all the possible deviations of agent~$i$, we have that the deviation towards $(b_i^*,p_i^*)$ would provide a positive-utility term $\geq v^*_i$. Indeed, it is always possible to allocate ad~$i$ in slot~$1$ and not to display the other ads, thus obtaining exactly $v^*_i$, or allocating other ads in addition to ad~$i$, thus obtaining strictly more than $v^*_i$.
However, the positive-utility term provided when agent~$i$ deviates towards $(b_i^*,p_i^*)$ is not larger than the positive-utility term in the Nash equilibrium, otherwise we would not be in a Nash equilibrium.
Therefore, we have:
\begin{multline*}
v^*_i \leq
\lambda_{f^{\text{NE}}(i)}\, q_i(p_i^{\text{NE}},p_{\min}^{\text{NE}})\, \alpha_i\, (p_i^{\text{NE}}-c_i) + \\ \sum_{h \neq i} \lambda_{f^{\text{NE}}(h)}\, q_h(p_h^{\text{NE}},p_{\min}^{\text{NE}})\, b_h^{\text{NE}} \leq
 \SW^{\text{NE}}.
\end{multline*}

Now, we show that the value of the optimal allocation, say $\text{OPT}$, is $\leq m \, \SW^{\text{NE}}$.
Indeed, we have:
\[
\text{OPT} \leq \sum_{i \in \text{top}(m)} v_i^* \leq m \, \SW^{\text{NE}},
\]
where $\text{top}(m)$ are the ads with the top $m$ values $v_i^*$. 
The first inequality holds as $\text{OPT}$ cannot be larger than the sum of the top $m$ values $v_i^*$.
The second inequality follows from what showed above.
Thus, it follows $\text{OPT} \leq m\, \SW^{\text{NE}}$.
\end{proof}

\thmnine*

\begin{proof}
The proof is based on the following setting.
\textsc{Setting}.
Consider the following setting:
\begin{itemize}
\item $n=2$;
\item $m=1$;
\item $q_1(p_1,p_{\min}) = 
\begin{cases}
1	& 	\text{if } p_1 \leq \overline{p} \text{ and } p_1 = p_{\min};	\\
0\;\;\;\;\;\;\,	& 	\text{otherwise};
\end{cases}$
\item $q_2(p_2,p_{\min}) = 
\begin{cases}
\delta	& 	\text{if } p_2 \leq \overline{p} \text{ and } p_2 = p_{\min};	\\
0\;\;\;\;\;\;\,	& 	\text{otherwise};
\end{cases}$\\
where $0<\delta < 1$;
\item $c_1 = c_2 = 0$;
\item $\alpha_1 = \alpha_2 = 1$;
\item $\lambda_1 = 1$.
\end{itemize}
\textsc{Social welfare of the optimal allocation}.
The optimal allocation is $f(1)=1, f(2)=\bot$ with $p_1=p_2 = \overline{p}$.
\textsc{Social welfare with \ivcg{} and \igsp{}}.
When overbidding is allowed, the following strategy profile is a Nash equilibrium $(b_1 = 0, p_1 = \overline{p}, b_2 = 2\,\overline{p}/\delta, p_2 = \overline{p})$.
Let us notice that indirect-revelation VCG and GSP mechanisms are the same mechanism in this case.
Indeed, agent~$2$ gets a utility of $\delta$, as her payment is $\pi_2 = 0$, and there is no other strategy providing agent~$2$ a strictly larger utility.
Agent~$1$ can get her ad allocated, but, in doing that, she would be charged of a payment $\pi_1 = 2 \,\overline{p}$ strictly larger than her value.
Thus, agent~$1$ will not do it.
As a result, strategy profile $(b_1 = 0, p_1 = \overline{p}, b_2 = 2\, \overline{p}/\delta, p_2 = \overline{p})$ is a Nash equilibrium.
Thus, the Price of Anarchy is $1/\delta$, that is unbounded from above as $\delta \rightarrow 0^+$.
\end{proof}

\thmten*

\begin{proof} 
The proof is based on the following setting, in which \dvcg{} provides a strictly positive revenue, while both \ivcg{} and \igsp{} provide a revenue of zero in every Nash equilibrium.
\textsc{Setting}. Consider the following setting:
\begin{itemize}
\item $n=m=2$;
\item $q_1(p_1,p_{\min}) = 
\begin{cases}
1	& 	\text{if } p_1 \leq \overline{p} \text{ and } p_1 = p_{\min}	\\
0& 	\text{otherwise}
\end{cases}$;
\item $q_2(p_2,p_{\min}) = 
\begin{cases}
1		& 	\text{if } p_2 < \underline{p} \text{ and } p_2 = p_{\min}	\\
\psi(p_2)	& 	\text{if } \underline{p} \leq p_2 \leq \overline{p} \text{ and } p_2 = p_{\min}	\\
0		& 	\text{otherwise}
\end{cases}$ \;\; where $1 \leq \underline{p} <\overline{p}/2$;
\item $c_1 = c_2 = 0$;
\item $\alpha_1 = \alpha_2 = 1$;
\item $\lambda_1 = \lambda_2 = 1$.
\end{itemize}
Furthermore, function $\psi(p_2)$ is defined as follows:
\[
\psi(p_2)= 
\begin{cases}
1				& \text{if } p_2 = \underline{p};		\\
\dfrac{\left(1+\delta\right)\, \left(\overline{p} - \underline{p}\right)}{\left(1 - \frac{1+\delta}{2}\right)\,p_2 + \frac{1+\delta}{2}\,\overline{p}-\underline{p} } -1				& \text{if } \underline{p} < p_2 < \overline{p}; 							\\
\delta			& \text{if }  p_2 = \overline{p};
\end{cases}
\]
where $0 < \delta < \underline{p}/\overline{p}< 1/2$. 
In particular, function $\psi(p_2)$ is an hyperbola such that $\psi(\underline{p})= 1$ and $\psi(\overline{p})= \delta$ and is continuous and monotonically decreasing in $ [\underline{p}, \overline{p}]$. 

\textsc{Revenue with \dvcg}.
We notice that, by definition of $q_1,q_2$, if $p_i = p_{\min}< p_{-i}$, then $q_{-i} = 0$. \footnote{We denote with $-i$ the advertiser $j\neq i.$} 
Therefore, when $p_i < p_{-i}$, no more than one ad with strictly positive value can be allocated.
That is, in this case, the allocation is composed of a single ad.
In particular, the ad~$j$ with the maximum $q_j(p_{j}, p_{j}) \, p_{j}$ is the only ad to be allocated.
However, it can be trivially observed that, when $p_i < p_{-i} \leq \overline{p}$, the mechanism can always increase the value of the allocation by changing the price of the ad that it would not allocate.
In particular, the mechanism can set that price equal to the price of the ad that it would allocate. 
In this way, both ads will be displayed, strictly increasing the value of the allocation.
Thus, setting $p_1 \neq p_2$ is never optimal.
We restrict our attention to prices $\underline{p},\overline{p}$ for both advertisers, and we search for the best allocation and prices.
We show below that the agents cannot increase their utility by using  prices different from $\underline{p},\overline{p}$.
The possible combinations of prices are the following:
\begin{itemize}
\item $(p_1= \underline{p}, \,p_2 = \underline{p})$: one of best allocations is $f(1) = 1$, $f(2)=2$, and the social welfare of the allocation is $2\, \underline{p}$;
\item $(p_1 = \overline{p}, \,p_2 = \overline{p})$: the best allocation is $f(1) = 1$, $f(2)=2$, and the social welfare of the allocation is $(1 + \delta)\, \overline{p}$.
%
%
%
\end{itemize}
Thus, the best allocation is for $(p_1 = \overline{p}, \,p_2 = \overline{p})$ as, by construction, $2 \, \underline{p}< \overline{p} < (1+\delta)\,\overline{p}$.
Now, we show that no price different than $\underline{p},\overline{p}$ can lead to a better social welfare.
Notice that any price $< \underline{p}$ leads to a value strictly smaller than that provided by $\underline{p}$ as $q_1 = q_2 = 1$, and for any price $> \overline{p}$ we have $q_1,q_2 = 0$ and therefore the social welfare is zero.
Thus, we can safely restrict our attention to the set $(\underline{p},\overline{p})$.
By construction of $\psi(p_2)$, for every $\underline{p} < p_2 < \overline{p}$ we have:
\begin{align*}
(1+\psi(p_2))\,p_2 &=  \\
\left(1+\dfrac{\left(1+\delta\right)\, \left(\overline{p} - \underline{p}\right)}{\left(1 - \frac{1+\delta}{2}\right)\,p_2 + \frac{1+\delta}{2}\,\overline{p}-\underline{p} } -1\right) \, p_2 & =  \\
\dfrac{\left(1+\delta\right)\, \left(\overline{p} - \underline{p}\right)}{\left(1 - \frac{1+\delta}{2}\right)\,p_2 + \frac{1+\delta}{2}\,\overline{p}-\underline{p} }\, p_2,
\end{align*}
which is an hyperbola whose supremum is for $p_2 \rightarrow \overline{p}$.
Thus, for every $p_2 < \overline{p}$ we have:
\[
(1+\delta)\,\overline{p} > (1+\psi(p_2))\,p_2,
\]
and therefore the optimal allocation is for $(p_1 = \overline{p}, \,p_2 = \overline{p})$.
The payments are such that $\pi_1 = \underline{p} - \delta\, \overline{p} > 0 $ (since the best allocation without ad~$1$ is $f(2)=1$ with $p_2 = \underline{p}$) and $\pi_2 = p_1 - p_1 = 0$ (since ad~$2$ does not introduce any externality in the optimal allocation).
Therefore, the revenue of the mechanism is $\underline{p} - \delta\, \overline{p} > 0 $.
\textsc{Revenue with \ivcg}.
Initially, we observe that when $p_1 = p_2$, the payments are zero for every pair of declared gain $b_1,b_2$. 
Basically, this is because the prices are set by the agents, and therefore the price used in the optimal allocation with both ads and the price used by the VCG payments for the optimal allocation when an ad is removed from the market are the same. 
Since those prices are the same and $\lambda_1 = \lambda_2$, the values of an ad in the optimal allocation with both ads and in the allocation when the other ad is discarded are the same.
Therefore, a strictly positive revenue of the mechanism is possible only when $p_1 \neq p_2$.
In the following, we show that there is no Nash equilibrium with $p_1 \neq p_2$.
Consider any input profile $(b_1,p_1,b_2,p_2)$ in which $p_1 \neq p_2$.
As argued above, when $p_1 \neq p_2$, no more than one ad with strictly positive value can be allocated due to the definition of functions $q_1,q_2$.
Assume that ad $i$ is allocated, while ad $-i$ is not.
Focus on the case $p_i\leq \overline{p}$.
We have $u_{-i} = 0$, ad~$-i$ not being allocated. 
If agent~$-i$ inputs $p_{-i} = p_i$ and any $b_{-i}>0$, then she gets a utility $u_{-i} = q_{-i}(p_i,p_i)\,p_i >0$ as agent~$i$'s payment is $\pi_{-i}=p_i - p_i = 0$.
Therefore, agent~$-i$ would deviate to play $p_{-i} = p_i$ and any $b_{-i}>0$.
Notice that inputting these values may be not the best response of agent~$-i$. 
However, it is always the case that, if $p_i \neq p_{-i}$ and ad~$-i$ is not allocated, then agent~$-i$ can increase her utility by playing $p_{-i} = p_i$.
Focus on the case $p_i> \overline{p}$.
In this case, ad~$i$ gets no value from being allocated, and therefore she will reduce the price such that $p_i\leq \overline{p}$.
Thus, in the setting provided above, there is no Nash equilibrium in which $p_1 \neq p_2$.
Finally, we show that the inputs $(b_1= \overline{p},p_1= \overline{p},b_2= \overline{p},p_2= \overline{p})$ are in equilibrium.
We notice that for any $b_1,b_2>0$, both ads are displayed and the value of the allocations keeps to be $(1+\delta)\,\overline{p}$.
Thus, we analyze possible deviations to different values of $p_1=p_2= \overline{p}$.
Notice that such deviations would lead the agents to have different values of prices and therefore only one ad is displayed.
Focus on agent~$1$.
Any $p_1 > \overline{p}$ would make ad~$1$ not be allocated.
Any $p_1 < \overline{p}$ making ad~$1$ be the only allocated ad would give agent~$1$ a value $< \overline{p}$ and would charge agent~$1$ of $\delta\,\overline{p}$, leading to a utility $u_1 < \overline{p}$.
Thus, agent~$1$ cannot improve her utility by deviating from $p_1 = \overline{p}$.
Focus on agent~$2$.
Any $p_2 \neq \overline{p}$ would make ad~$2$ not be allocated.
Thus, agent~$2$ cannot improve her utility by deviating from $p_2 = \overline{p}$.
This means that there is always a pure-strategy Nash equilibrium in which $p_1 = p_2$. 
Notice that we do not exclude the case in which there are other Nash equilibria than $(b_1= \overline{p},p_1= \overline{p},b_2= \overline{p},p_2= \overline{p})$.
However, any other Nash equilibrium is with $p_1= p_2$, thus providing a revenue of zero to the mechanism.
This shows that the Price of Stability is unbounded in \ivcg.
\textsc{Revenue with \igsp}.
The proof in this case follows arguments similar to those used above for the case of \ivcg.
In order to guarantee the existence of a Nash equilibrium, we need to assume that ad~$i$ is allocated even if she declares a gain $b_i=0$.
%
%
Initially, we observe that, when $p_1 = p_2$, there is no Nash equilibrium in which both $b_1$ and $b_2$ are strictly larger than $0$. 
Assume by contradiction that both $b_1$ and $b_2$ are strictly larger than $0$ and $p_1 = p_2$.
Assume, w.l.o.g., that $f(i)= 1$ and $f(-i)= 2$.
Agent~$i$'s payment is $\pi_i = b_{-i}> 0$, while agent~$-i$'s payment is $\pi_{-i} = 0$.
Thus, agent~$i$ can improve her utility by bidding $b_i= 0$ such that ad~$i$ is displayed in the second slot so as to be charged of a payment $\pi_i = 0$. 
Therefore, we have a contradiction and $b_1$ and $b_2$ cannot be both strictly larger than $0$ in a Nash equilibrium when $p_1 = p_2$.
%
%
Notice that, if ad~$i$ is not displayed if $b_i=0$, every agent's best response would be that of making a strictly positive bid smaller than the opponent's bid and therefore the best response dynamics will never reach a fixed point, thus leading to the non-existence of the equilibrium. 
Notice also that, if $b_i = 0$, then $f(i)=2$ and the sum of the payments is such that $\pi_i+\pi_{-i}=0$.
As discussed for the case of the indirect-revelation VCG mechanism, if prices $p_1,p_2$ are different, the non-allocated ad~$i$ can improve her utility by changing her price $p_i$ as $p_i = p_{-i}$.
Furthermore, as above, setting just $p_i = p_{-i}$ may be not the best response of agent~$i$, but there is no Nash equilibrium when $p_i \neq p_{-i}$.
Finally, we prove that there is at last a Nash equilibrium when $p_1 = p_2$. 
In particular, a Nash equilibrium is $(b_1 = \overline{p}, p_1 = \overline{p}, b_2 = 0, p_2 = \overline{p})$.
As remarked above, both payments $\pi_1,\pi_2$ are zero as $p_1 = p_2$.
Furthermore, as in the case of \ivcg, deviating toward a price different from $\overline{p}$ would make that only a single ad has a strictly positive value from being allocated. 
If agent~$i$ deviates to $p_i < \overline{p}$ and it is the ad that is displayed, then her value is smaller than that she gets when $\overline{p}$.
Thus, agent~$i$ has no strictly positive incentive to deviate.
This shows that $(b_1 = \overline{p}, p_1 = \overline{p}, b_2 = 0, p_2 = \overline{p})$ is a Nash equilibrium and that the revenue of the mechanism is zero.
As in the case of the indirect-revelation VCG mechanism, there is no guarantee that such a Nash equilibrium is unique.
However, any other Nash equilibrium is with $p_1=p_2$ and $b_i= 0$ for at least one agent~$i$.
Thus, every Nash equilibrium provides a revenue of zero to the mechanism.
%
%
\end{proof}

\thmeleven*

\begin{proof}
 It follows from two considerations:
\begin{itemize}
\item with VCG payments, bidding the real $b_i$ and the price $p_i$ that \dvcg{} would choose is a Nash equilibrium, and
\item with $m=1$ slot, the payments of \dvcg{} and \ivcg{} are the same.
\end{itemize}
This concludes the proof.
\end{proof}

\thmtwelve*

\begin{proof}
The proof is based on the following setting.
\textsc{Setting}. Consider the following setting:
\begin{itemize}
\item $n=2$;
\item $m=1$;
\item $q_1(p_1,p_{\min}) = 
\begin{cases}
1	& 	\text{if } p_1 \leq \overline{p} \text{ and } p_1 = p_{\min}	\\
0& 	\text{otherwise}
\end{cases}$;
\item $q_2(p_2,p_{\min}) = 
\begin{cases}
1		& 	\text{if } p_2 \leq \underline{p}\\
0		& 	\text{otherwise}
\end{cases}$;
\item $c_1 = c_2 = 0$;
\item $\alpha_1 = \alpha_2 = 1$;
\item $\lambda_1 =1$,
\end{itemize}
where $0<\underline{p}< 0.5\, \overline{p}.$ 

\textsc{Revenue with \dvcg}. 
In The optimal allocation we have $f(1)=1$ and $f(2)=\bot$ with $p_1 = \overline{p}$ and $p_2 = \underline{p}$. The revenue $\Rev$ is $\underline{p}>0$.

\textsc{Revenue with \igsp}. 
Every Nash equilibrium prescribes that $b_1>\underline{p}\geq b_2$ and $p_1=\overline{p}$ as agent~$1$ can get the first slot, payment $\pi_1$ is zero as $p_1 =p_{\min}>p_2$, and the expected value of agent~$1$ is maximized.
Furthermore, payment $\pi_2 = 0$ since ad~$2$ is not allocated.
Thus, revenue $\Rev$ is zero leading to a PoS of $\infty$.
\end{proof}

\ivcgstarthm*

\begin{proof}
	First we show that the mechanism is individually rational. In particular, we show that $\pi_i\le \lambda_{\tilde f(i)}\,q_i(p_i,p_{\min}) \,b_i$ for every $i \in N$, where $\tilde f$ is the allocation chosen by the mechanism \ivcgstar{}.
	We distinguish two cases. If $\widehat{\SW} \geq \max_j \overline{\SW}^{-j}$, then the allocation returned by the mechanism is $\bar f$ and the payment  $\pi_i=\overline{\SW}^{-i}-\widehat{\SW}+\lambda_{\bar f(i)}q_i(p_i,p_{\min}) b_i\le \lambda_{\bar f(i)}q_i(p_i,p_{\min}) b_i$, since $\widehat{\SW} \geq  \overline{\SW}^{-i}$.
	If $\widehat{\SW} \geq \max_j \overline{\SW}^{-j}$, than $\pi_i
	=0$ by construction. Therefore, $u_i$ is always non-negative.
	Moreover, the mechanism is weakly budget-balanced since all payments are at least  $0$.
	In particular, if $\widehat{\SW} \geq \max_i \overline{\SW}^{-i}$, the payment of agent $i$ is $\pi_i=\overline{\SW}^{-i}-\widehat{\SW}+\lambda_{\bar f(i)}q_i(p_i,p_{\min}) b_i\ge 0$ by the optimality of $\overline{\SW}^{-i}$. Otherwise, all the payments are $0$.
	
	Finally, we show that bidding truthfully and with the prices that \dvcg{} would charge  is an equilibrium with the same revenue of \dvcg.
	Let $\alpha_i$ and $c_i$ be the private information of agent $i$. 
	Moreover, for each agent $i$, let $p_i$ be the price selected by \dvcg{} with truthful bidding, $b_i=\alpha_i (p_i- c_i)$, and $p^*_i=\arg \max_p \alpha_i q(p,p) (p-c_i)$.
	Since $p^*_i$ maximizes $ \alpha_i q(p,p) (p- c_i)$, then its derivative is $0$ in $p^*_i$, \emph{i.e.}, $\frac{dq(p_i,p_i)}{dp_i}\big\vert_{p_i=p_i^*} (p^*_i-c_i)+q(p^*_i,p^*_i)=0$, implying that $c_i= q(p^*_i,p^*_i)/\frac{dq(p_i,p_i)}{dp_i}\big\vert_{p_i=p_i^*}+p^*_i$.
	This requires that the derivative of $q_i(p_i^*,p_i^*)$ is strictly positive in $p_i^*$.
	Hence, \ivcgstar correctly computes $\hat c_i=c_i$.
	Moreover, since $b_i=\alpha_i (p_i- c_i)$, then $\alpha_i=\frac{b_i}{p_i-c_i}$ and $\hat \alpha_i=\alpha_i$. \footnote{Notice that we can assume that in the optimal allocation $p_i-c_i\neq 0$. Otherwise the utility of the agent is $0$ and there exists an allocation with the same $\SW$ that does not assign any slot to the agent and increases the price and the expected gain.}
	Hence, since the agents submit the optimal prices, the mechanism computes a $\widehat{\SW}$ that is equivalent to the social welfare of \dvcg.
	Moreover, also $\overline{\SW}^{-i}$ is the same of the direct mechanism for each $i$, since it optimizes over the real bidders' types. Hence, $\widehat{\SW} \ge \max_i \overline{\SW}^{-i}$ and the mechanism has the same payments and revenue of the direct mechanism.
	We conclude the proof showing that this is an equilibrium.
	Similarly to the proof of Theorem \ref{thm:indirect}, we have:
	$$
	u_i(\bar f, \p, \bpi) = \widehat{\SW} - \overline{\SW}^{-i}\ge 0$$
	Two cases are possible. If agent $i$ changes the strategy keeping $\widehat{\SW} \ge \overline{\SW}^{-i}$, the claim follows by the optimality of $\widehat{\SW}$.
	Otherwise, the utilities of all the agents are $0$.
	This concludes the proof.
\end{proof}

\end{document}